\documentclass[10pt,onecolumn]{IEEEtran}
\usepackage[latin9]{inputenc}
\usepackage{amsmath,nccmath}
\usepackage{amssymb}
\usepackage{graphicx}
\usepackage{esint}
\usepackage[section]{placeins}
\usepackage{amsmath,amssymb,amsthm}
\usepackage{thmtools,thm-restate}
\usepackage{hyperref}
\usepackage{cleveref}
\usepackage{ifpdf}
\usepackage{cite}
\ifCLASSINFOpdf
\else
\fi
\hyphenation{op-tical net-works semi-conduc-tor}
\usepackage{color}
\usepackage{placeins}
\usepackage{float}
\usepackage{tabularx}
\usepackage{colortbl}\usepackage{amsthm}

\usepackage{thmtools}
\begin{document}
\pdfoutput=1
\title{Search for Smart Evaders with Swarms of Sweeping Agents}

\author{Roee M. Francos$^{1}$ and Alfred M. Bruckstein$^{1}$
\thanks{$^{1}$Roee M. Francos and Alfred M. Bruckstein are with the Faculty
of Computer Science, Technion- Israel Institute of Technology, Haifa, Israel, 32000.
        {\tt\small roee.francos@cs.technion.ac.il},
        {\tt\small alfred.bruckstein@cs.technion.ac.il}}%
}

\maketitle

\begin{abstract}
Suppose that in a given planar circular region, there are some smart mobile evaders and we would like to find them using sweeping agents. We assume that each agent has a line sensor of length $2r$. We propose procedures for designing cooperative sweeping processes that ensure the successful completion of the task, thereby deriving conditions on the sweeping velocity of the agents and their paths. Successful completion of the task means that evaders with a given limit on their velocity cannot escape the sweeping agents. A simpler task for the sweeping swarm is the confinement of the evaders to their initial domain. The feasibility of completing these tasks depends on geometric and dynamic constraints that impose a lower bound on the velocity that the sweeper swarm must have. This critical velocity is derived to ensure the satisfaction of the confinement task. Increasing the velocity above the lower bound enables the agents to complete the search task as well. We present results on the total search time as a function of the sweeping velocity of the swarm's agents given the initial conditions on the size of the search region and the maximal velocity of the evaders.
\end{abstract}

\begin{IEEEkeywords}
Mobile Robots, Intelligent Autonomous Systems, Multi-Agent Systems (MAS), Motion and Path Planning for MRS, Planning and Decision Making for MRS/MAS, Teamwork, team formation, teamwork analysis, Applications of MRS
\end{IEEEkeywords}

\section{Introduction}
An interesting challenge for multi-agent systems is the design of searching or sweeping algorithms for static or mobile targets in a region, which can either be fully mapped in advance or unknown. Often the aim is to continuously patrol a domain in order to detect intruders or to systematically search for mobile targets known to be located within some area. Search for static targets involves complete covering of the area where they are located, but a much more interesting and realistic scenario is the question of how to efficiently search for targets that are dynamic and smart. A smart target is one that detects and responds to the motions of searchers by performing optimal evasive maneuvers, to avoid interception.

Several such problems originated in the second world war due to the need to design patrol strategies for aircraft aiming to detect ships or submarines in the English channel, see e.g. \cite{koopman1980search}. The problem of patrolling a corridor using multi agent sweeping systems in order to ensure the detection and interception of smart targets was also investigated by Vincent et. al. in \cite{vincent2004framework} and provably optimal strategies were provided by Altshuler et.al. in \cite{altshuler2008efficient}. A somewhat related, discrete version of the problem, was also investigated by Altshuler et. al. in \cite{altshuler2011multi}. It focuses on a dynamic variant of the cooperative cleaners problem, a problem that requires several simple agents to a clean a connected region on the grid with contaminated pixels. This contamination is assumed to spread to neighbors at a given rate. In \cite{mcgee2006guaranteed}, McGee et. al. investigate a search problem for smart targets. The targets do not have any maneuverability restrictions except for the maximal velocity they can move in and the sensor that the agents are equipped with detects all targets within a disk shaped area around the searcher location. The work of McGee et. al. \cite{mcgee2006guaranteed} consider search patterns consisting of spiral and linear sections. In \cite{hew2015linear}, Hew et. al. consider searching for smart evaders using concentric arc trajectories with agents sensors similar to \cite{mcgee2006guaranteed}. Such a search is proposed for detecting submarines in a channel or in an half plane. The paper focuses on determining the size of a region that can be successfully patrolled by a single searcher, where the searcher and evader velocities are known. The search problem in the paper is formulated as an optimization problem so that the search progress per arc or linear iteration, has to be maximized while guaranteeing that the evader cannot slip past the searcher undetected.

In our previous work \cite{francos2019search}, the confinement and cleaning tasks for a line formation of agents or alternatively for a single agent with a linear sensor are analyzed in terms of task completion times, i.e. the time at which all potential targets that resided in the initial evader region were detected. To guarantee the detection of all targets, the search is terminated when the area that contains potential evaders reaches $0$. A global consideration of swept areas versus the expanding area of the "danger zone" or the "evaders' possible locations" yields a lower bound on a searcher velocity that is independent of the search process. Several methods are proposed on how to determine the minimal velocity a single sweeper agent should have, in order to shrink the evader region to be bounded by a circle with a smaller radius than the searcher's sensor length. In \cite{francos2019search} an additional requirement on the ratio between the searcher velocity and the maximal evader expansion velocity that must be satisfied in order for a single agent or a line formation of agents to completely clean the evader region is provided. In \cite{francos2019search} for the case the agent or the linear formation of agents travels in a circular pattern around the evader region, it is proved that the minimal agent velocity has to be more than twice the lower bound. In \cite{francos2019search} it is proven that a single agent that employs a circular search around the evader region cannot completely clean the evader region without modifying its search pattern. In \cite{francos2019search} it is also proposed that in order to completely clean the evader region a modification for the sweep process which the single agent employs after the evader region is bounded by a circle with a radius of less than $r$ must be made.

Our work considers a scenario in which a multi-agent swarm of identical agents search for mobile targets or evaders that are to be detected. The information the agents perceive only comes from their own sensors, and evaders that intersect a sweeper's field of view are detected. We assume that all agents have a linear sensor of length $2r$. The evaders are initially located in a disk shaped region of radius $R_0$. There can be many evaders we wish do detect, and we consider the domain to be continuous, meaning that an evader can be located at any point in the interior of the circular region at the beginning of the search process. The sweepers are designed in a way that will require a minimal amount of memory in order to complete the required task due to the fact that the sweeping protocol is predetermined and deterministic. All sweepers move with a speed of $V_s$ (measured at the center of the linear sensor). By assumption the evaders move at a maximal speed of $V_T$, without any maneuverability restrictions. The sweeper swarm's objective is to "clean" or to detect all evaders that can move freely in all directions from their initial locations in the circular region of radius $R_0$. The search time will clearly depend on the type of sweeping movement the swarm employs.
The detection of evaders is done using deterministic and preprogrammed search protocols around the region.
We consider two types of search patterns, circular and spiral patterns. The desired result is that after each sweep around the region, the radius of the circle that bounds the evader region for the circular sweep, or the actual radius of the evader region for the spiral sweep, will decrease by a value that is strictly positive. This will guarantee a complete cleaning of the evader region, by shrinking the possible area in which evaders can reside to zero, in finite time. At the beginning of the circular search process we assume that only half the length of the agents sensors is inside the evader region, i.e. a footprint of length $r$, while the other half is outside the region in order to catch evaders that may move outside the region while the search progresses. At the beginning of the spiral search process we assume that the entire length of the agents sensors is inside the evader region, i.e. a footprint of length $2r$. We analyze the proposed sweep processes' performance in terms of the total time to complete the search, defined as the time at which all potential evaders that resided in the initial evader region were found. At first we provide a global balance of covered areas argument that is derived from a maximal swept area versus a minimal danger zone expansion area. This argument results in an equation that yields a lower bound on a searcher velocity that is independent of the search process. Secondly, we examine the performance of a multi agent swarm that performs a proposed circular sweep process. A critical velocity that depends on this circular search process is then derived. The proposed circular search pattern ensures the satisfaction of the confinement task. The developed circular critical velocity is compared to the lower bound on the critical velocity. We then show that the minimal agent velocity that ensures the satisfaction of the confinement task for the proposed circular search process is equal to twice the lower bound and hence is not optimal. Expressions for the complete cleaning times of the evader region as a function of the search parameters ,$R_0,r,V_T$ and the number of agents in the swarm are derived, evaluated and discussed. While the purpose of designing a circular search process is to perform the task with simple agents, it is not optimal. Therefore, the presented circular search pattern is improved into a proposed novel multi agent swarm spiral sweep process that uses spiral scans that draw inspiration from a previous work of McGee, \cite{mcgee2006guaranteed}. The proposed pattern tracks the wavefront of the expanding evader region and strives to have as close as possible to optimal sensor footprint over the evader region. Based on this proposed search pattern we develop an additional critical velocity that ensures the satisfaction of the confinement task for multi agent swarms that employ the spiral sweep process. We show that the developed spiral critical velocity approaches the theoretical optimal critical velocity that is independent of the search process.  Expressions for the complete cleaning times of the evader region as a function of the search parameters and the number of agents in the swarm are derived, evaluated and discussed for the spiral search process as well. Finally, we compare the different search methods, for both the circular and spiral sweep processes, in terms of completion times of the sweep processes. When comparing the different search processes we compare both the total cleaning times as well as the minimal searcher velocity that are required for a successful search.

As opposed to our work \cite{mcgee2006guaranteed} uses a disk shaped sensor with a radius of $r$, and does not calculate the time it takes to find all evaders. Furthermore, in our proposed sweep processes the initial positions of the sweeper agents are different from the initial placements of agents in \cite{mcgee2006guaranteed}.
In \cite{hew2015linear} the searcher uses a circular sensor of radius $r$ that detects evaders if and only if they are at a distance of at most $r$ from the searcher differing from the linear sensors used in our work.

This report is organized as follows. Section $\text{II}$ proves an optimal bound on the cleaning rate for a swarm that is independent of the search process that is deployed. This bound will serve as one of the benchmarks for comparing the performance of different search algorithms. In section $\text{III}$ the results for the completion of the search process for a swarm of sweeping agents that employ the circular sweep process are presented. In section $\text{IV}$ we perform an analysis for the case where the swarm employs the spiral sweep process. In section $\text{V}$ we provide a comparative unified analysis of the proposed search strategies that were developed in the previous sections. In the last section conclusions are given and future research directions are discussed.

\section{A Universal Bound On Cleaning Rate}
\noindent In this section we present an optimal bound on the cleaning rate of a searcher with a linear shaped sensor. This bound is independent of the particular search pattern employed. For each of the proposed search methods we then compare the resulting cleaning rate to the optimal derived bound in order to compare between different search methods. We will denote the searcher's velocity as $V_s$, the sensor length as $2r$ and the maximal velocity of an evading agent as $V_T$. The maximal cleaning rate occurs when the footprint of the sensor over the evader region is maximal. For a line shaped sensor of length $2r$ this happens when the entire length of the sensor is fully inside the evader region and it moves perpendicular to its orientation. The rate of sweeping when this happens has to be higher than the minimal expansion rate of the evader region (given its total area) otherwise no sweeping process can ensure detection of all evaders. We analyze the search process when the sweeper swarm is comprised of $n$ identical agents. The smallest searcher velocity satisfying this requirement is defined as the critical velocity and denoted by $V_{LB}$, we have:
\newtheorem{thm}{Theorem}
\begin{thm}
No sweeping process will be able to successfully complete the confinement task if its velocity, $V_s$, is less than,
\begin{equation}
{{\rm{V}}_{LB}} = \frac{{\pi {R_0}{V_T}}}{nr}
\label{e1}
\end{equation}
\end{thm}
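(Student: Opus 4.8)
The plan is to prove the contrapositive via a \emph{global balance of areas} argument, comparing the maximal possible rate at which the swarm can clean the danger zone against the minimal rate at which that zone expands. Let $A(t)$ denote the area of the ``danger zone'' --- the set of points that could still harbour an undetected evader at time $t$ --- so that $A(0)=\pi R_0^2$. The confinement task requires that $A(t)$ never grow beyond its initial value, which at the very least forces $\frac{dA}{dt}\le 0$ at $t=0$. I would therefore bound the two competing contributions to $\frac{dA}{dt}$ separately and show that $V_s<V_{LB}$ makes the net rate strictly positive at the initial instant, so that confinement already fails there.

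First I would bound the expansion from below. Since each evader may move at speed up to $V_T$ in any direction, the boundary of the danger zone can advance outward at speed $V_T$, so the region gains area at a rate equal to its perimeter times $V_T$. At $t=0$ the boundary is the circle of radius $R_0$ with perimeter $2\pi R_0$, so the expansion contributes $2\pi R_0 V_T$ to $\frac{dA}{dt}$. More generally, the isoperimetric inequality guarantees that the circle minimises perimeter for a given enclosed area, so this is the slowest any region of this area could expand; this is precisely what makes the resulting bound universal and independent of the search pattern.

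Next I would bound the cleaning from above. A single line sensor of length $2r$ removes area at the greatest possible rate when it is fully immersed in the danger zone and translated perpendicular to its own orientation, giving a footprint rate of $2r\,V_s$; summing over the $n$ identical agents yields a maximal total cleaning rate of $2nrV_s$, again independent of the particular trajectories chosen. Imposing $\frac{dA}{dt}\le 0$ at $t=0$ then requires $2nrV_s\ge 2\pi R_0 V_T$, i.e. $V_s\ge \pi R_0 V_T/(nr)=V_{LB}$; equivalently, if $V_s<V_{LB}$ the danger-zone area must strictly increase and confinement is impossible. The main obstacle, and the step demanding the most care, is justifying rigorously that $2nrV_s$ is a genuine \emph{upper} bound on the cleaning rate for \emph{every} admissible motion of the agents --- that no clever overlap, sensor orientation, or re-sweeping of already-cleared area can ever exceed the ``all sensors fully immersed and moving broadside'' rate --- and, symmetrically, that the perimeter-times-$V_T$ estimate remains a valid lower bound on the expansion even after the boundary has been partially eroded by the sensors.
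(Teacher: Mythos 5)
Your proposal is correct and follows essentially the same route as the paper's own proof: a global balance of covered areas, bounding the maximal cleaning rate by $2nrV_s$ (all sensors fully immersed, moving broadside) and the minimal expansion rate from below by $2\pi R_0 V_T$ via the isoperimetric inequality, then comparing the two. The only cosmetic difference is that you phrase the balance as a differential inequality $\frac{dA}{dt}\le 0$ at $t=0$, whereas the paper computes the area change over an interval $\Delta T$ and lets $\Delta T \to 0$; the inequality $2nrV_s \ge 2\pi R_0 V_T$ and the resulting bound $V_{LB} = \frac{\pi R_0 V_T}{nr}$ are identical.
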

\begin{proof}
Denote by $\Delta T$ the interval of search. The maximal area that can be scanned when the searcher moves with a velocity $V_s$ is given by,
\begin{equation}
{A_{Max Clean}} = 2r{V_s}\Delta T
\label{e2}
\end{equation}
Therefore if each agent in the swarm of $n$ cleans the maximal possible area, the maximal area that the swarm cleans is given by,
\begin{equation}
{A_{Max Clean}}(n) = 2nr{V_s}\Delta T
\label{e6}
\end{equation}
i.e., the best cleaning rate is $2rnV_s$. The least spread of the evader region that expands due to evaders' possible motion with velocity $V_T$ occurs when the region has the shape of a circle. This is due to the isoperimetric inequality: for a given area the minimal boundary length that encloses it happens when the shape of the region is circular. Therefore for an initial circular region with radius $R_0$ the evader region minimal expansion will be to a circle with a larger radius. For a spread of $\Delta T$ the radius of the evader region will grow to be $R_0 + \Delta T V_T$ and the area of the evader region will grow from $\pi {R_0}^2$ to $\pi {({R_0} + \Delta T{V_T})^2}$. Therefore the growth of the evader region area in time $\Delta T$ will be
${A_{Least Spread}} = \pi {({R_0} + \Delta T{V_T})^2} - \pi {R_0}^2 = 2\pi {R_0}\Delta T{V_T} + {\left( {\Delta T{V_T}} \right)^2}$. The spread rate will therefore be the division of the last expression by $\Delta T$. Letting $\Delta T \to 0$ the expression resolves to $2\pi {R_0}{V_T}$, the least possible spread rate. In order to guarantee the possibility of sweeping we must set the best cleaning rate to be larger than the worst spread of area that is $2rn{V_s} \geq 2\pi {R_0}{V_T}$. This yields the minimal velocity of a sweeper regardless of the search process it employs. Hence,
\begin{equation}
{V_s} \geq\frac{{\pi {R_0}{V_T}}}{nr}=
{{\rm{V}}_{LB}}
\label{e3}
\end{equation}
\end{proof}
Hopefully, after the first sweep the evader region is within a circle with a smaller radius than the initial evader region's radius. Since the sweepers travel along the perimeter of the evader region and this perimeter decreases after the first sweep, ensuring a sufficient sweeper velocity that guarantees that no evader escapes during the initial sweep guarantees also that the sweeper velocity is sufficient to prevent escape in subsequent sweeps as well.  The formulation of the problem in terms of the smallest possible searcher velocity that is needed in order to guarantee a no escape search is equivalent to asking what is the maximal boundable circular region that is possible to confine the evaders to given a searcher's velocity of $V_s$, sensor length of $2r$, $n$ sweepers and a maximal velocity of an evading agent that is equal to $V_T$. A plot of $V_{LB}$ is presented in Fig. $1$.
\begin{figure}[ht]
\noindent \centering{}\includegraphics[width=3.4in,height=2.8in]{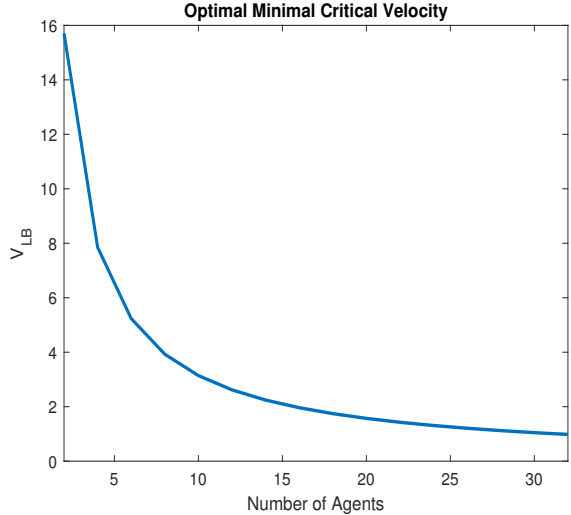} \caption{Minimal critical sweeper velocity. This velocity is a lower bound for any search process that the sweepers apply. $V_{LB}$ depends only on $R_0$, $r$, $V_T$ and $n$, the number of sweepers in the swarm. In this figure we plot the critical velocities for an even number of agents, ranging from $2$ to $32$ agents. The parameters values chosen for this plot are $r=10$, $V_T = 1$ and $R_0 = 100$.}
\label{Fig1Label}
\end{figure}

\section{Multiple Agents with Linear Sensors: the Circular Sweep Process}
A multi-agent swarm that is given the task of confining and detecting smart evaders that are initially located in a circular region with a given radius can be analyzed in a number of ways. One way is to examine the outcomes of adding multiple sweepers each equipped with a line sensor of length $2r$ and analyzing the times it takes to complete the detection of all evaders, which we will refer to as the cleaning of the evader region. In the single agent search problem described in \cite{francos2019search}, we observed that there can be escape from point $P=(0,R_0)$ when basing the searcher's velocity only on a single traversal around the evader region. Therefore we had to increase the agent's critical velocity to deal with this possible escape. If we were to distribute a multi-agent swarm  say, equally along the boundary of the initial evader region, we would have the same problem of possible escape from the points adjacent to the starting of the sweepers. We wish the sweepers to have the lowest possible critical velocity, hence we propose a different idea for the search process. The idea is to have pairs of sweeping agents move out in opposite directions along the boundary of the evader region and sweep in a pincer movement rather than having a convoy of sweepers moving in the same direction along the boundary. Our method is readily applicable for any even number of sweepers. Pairs of sweepers may start with half the length of their sensors inside the evader region while the other half is outside the region. The search process can be viewed as a $2$ dimensional search on which the actual agents travel on a plane or as a $3$ dimensional search were the sweepers are drone like agents which fly over the evader area. Evaders that intersect the sweepers sensors are detected. From any point in which they are located, evaders can move in any direction at a maximal velocity of $V_T$. The sweepers are positioned back to back with each other. One sweeper in the pair moves counter clockwise while the other sweeper in the pair moves clockwise. In case the search is planar, once the sweepers meet, i.e. their sensors are again back to back at another point, they switch directions and change the direction in which they move. That is, the agent that traveled counter clockwise will now travel clockwise and vice versa. For example, if the search if carried out by two sweepers, after the first sweep this switching point is located at $(0,-R_0)$. This changing of directions occurs every time a sweeper bumps into another. Each sweeper is responsible for an angular sector of the evader region that is proportional to the number of participating agents in the search.
In case the search is $3$ dimensional, where the sweepers fly at different heights above the evader region, every time a sweeper is directly above another, they exchange the angular section they are responsible to sweep between them and continue the search. The analysis of the two cases is exactly the same.
An illustration of the initial placement of agents for $2$ participating sweepers is presented in Fig. $2$.
\begin{figure}[ht]
\noindent \centering{}\includegraphics[width=3in,height =2.5in]{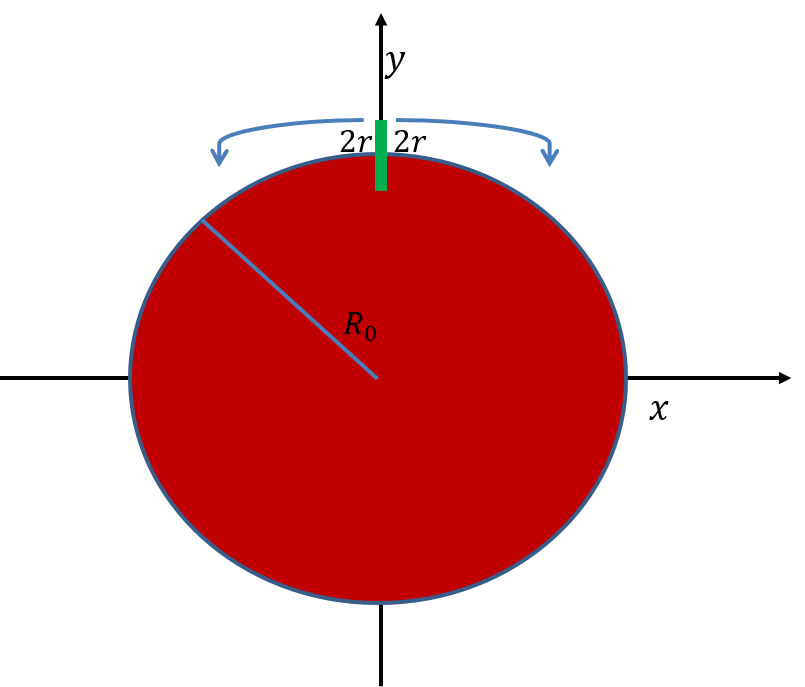} \caption{Initial placement of $2$ agents employing the circular sweep process.}
\label{Fig2Label}
\end{figure}
We analyze the case that the multi-agent swarm consists of $n$ agents, where $n$ is an even number, and each sweeper has a sensor length of $2r$. At the beginning of the search process the footprint of each sweeper's sensor that is over the evader region is equal to $r$. When employing this type of search pattern the symmetry between the two agent trajectories prevents the escape from point $P=(0,R_0)$ that is the most dangerous point an evader can escape from as proved in the single agent scenario described in \cite{francos2019search}. Therefore each sweeper's critical velocity can be based only upon the time it takes it to traverse the angular section it responsible for, namely $\frac{2\pi}{n}$. For example, in the scenario that the sweeper swarm consists of only two sweepers each sweeper is required to scan an angle of $\pi$. As is in the case of the single agent search, if the sweepers velocities are above the critical velocity of the scenario the agents can advance inwards towards the center of the evader region after completing a cycle. For the multi agent case the notion of a cycle or an iteration corresponds to an agent's traversal of the angular section it is required to scan. Once the agents finish scanning the angular section they are responsible for, and if their velocities are greater then the critical velocity that corresponds to the scenario they advance inwards together. For the planar search, only after the inward advancement do the sweepers change the direction of scanning. For the $3$ dimensional search the sweepers advance inwards together and after that exchange between them the angular section they are responsible to sweep and continue to scan a section with a smaller radius. Since each sweeper has a sensor length of $r$ outside the evader region, during an angular traversal of $\frac{2\pi}{n}$ around the evader region radius of $R_0$,  in order to guarantee that no point in the evader region escapes the sweepers we must demand that the spread of that point will be confined in a radius of no more than $r$ from the point it originated from at the beginning of the cycle. We therefore have that the following inequality must be satisfied,
\begin{equation}
\frac{{2\pi {R_0}}}{{n{V_s}}} \le \frac{r}{{{V_T}}}
\label{e22}
\end{equation}
Rearranging terms yields that the sweepers velocities must satisfy that,
\begin{equation}
{V_s} \ge \frac{{2\pi {R_0}{V_T}}}{{nr}}
\label{e23}
\end{equation}
The critical velocity for the circular sweep process is therefore given when we have equality in (\ref{e23}).
\begin{equation}
{V_c} = \frac{{2\pi {R_0}{V_T}}}{{nr}}
\label{e24}
\end{equation}
A comparison between the optimal minimal critical velocity and the circular critical velocity is presented in Fig. $3$. It can be noted from the terms in (\ref{e1}) and (\ref{e24}) that the circular critical velocity is exactly twice the optimal critical velocity.
\begin{figure}[ht]
\noindent \centering{}\includegraphics[width=3.4in,height=2.8in]{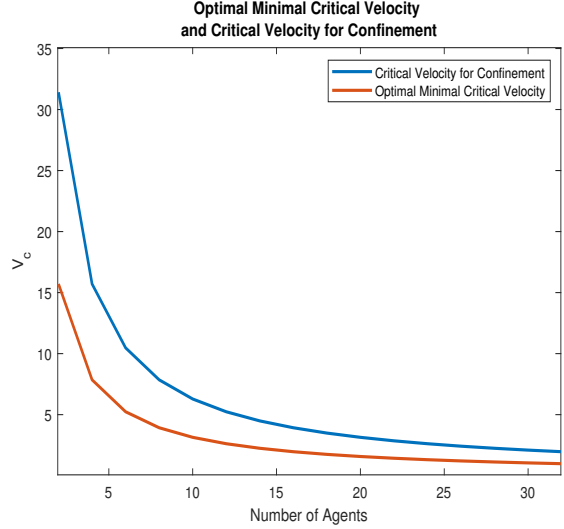} \caption{In this figure we plot the critical velocity as a function of the number of sweeper agents. The number of sweeper agents is even, and ranges from $2$ to $32$ agents, that employ the multi-agent circular sweep process. We also plot the optimal critical velocity for comparison.  The parameters values chosen for this plot are $r=10$, $V_T = 1$ and $R_0 = 100$.}
\label{Fig3Label}
\end{figure}

\begin{thm}
For an $n$ agent swarm for which $n$ is even that performs the circular sweep process where the sweeper distribution is as described, the number of iterations it will take the swarm to reduce the evader region to be bounded by a circle with a radius that is less than or equal to $r$ is given by,
\begin{equation}
{N_n} = \left\lceil {\frac{{\ln \left( {\frac{{2\pi r{V_T} - nr{V_s}}}{{2\pi {R_0}{V_T} - nr{V_s}}}} \right)}}{{\ln \left( {1 + \frac{{2\pi {V_T}}}{{n\left( {{V_s} + {V_T}} \right)}}} \right)}}} \right\rceil
\label{e98986}
\end{equation}
After ${N_n}$ sweeps the sweeper swarm performs an additional circular sweep and cleans the entire evader region.
\\
\\
We denote by ${T_{in}}$ the sum of all inward advancement times and by ${T_{circular}}$ the sum of all the circular traversal times. Therefore the time it takes the swarm to clean the entire evader region is given by,
\begin{equation}
T(n) = {T_{in}}(n) + {T_{circular}}(n)
\label{e98988}
\end{equation}
Where ${T_{in}}(n)$ is given by,
\begin{equation}
{T_{in}}(n) = \frac{{{R_0}}}{{{V_s}}} + \left( {\frac{{2\pi {R_0}{V_T} - nr{V_s}}}{{n{V_s}\left( {{V_s} + {V_T}} \right)}}} \right){\left( {1 + \frac{{2\pi {V_T}}}{{n\left( {{V_s} + {V_T}} \right)}}} \right)^{{N_n} - 1}}
\label{e98989}
\end{equation}
And ${T_{circular}}(n)$ is given by,
\begin{equation}
\begin{array}{l}
{T_{circular}}(n) =  - \frac{{{R_0}\left( {{V_s} + {V_T}} \right)}}{{{V_T}{V_s}}} + \frac{{nr\left( {{V_s} + {V_T}} \right) + 2\pi r{V_T}}}{{2\pi {V_T}^2}}
 + {\left( {1 + \frac{{2\pi {V_T}}}{{n\left( {{V_s} + {V_T}} \right)}}} \right)^{{N_n}}}\left( {\frac{{\left( {{V_s} + {V_T}} \right)\left( {2\pi {R_0}{V_T} - rn{V_s}} \right)}}{{2\pi {V_s}{V_T}^2}}} \right)
 + \frac{{r\left( {{N_n} - 1} \right)}}{{{V_T}}} + \frac{{2\pi r}}{{n{V_s}}}
\end{array}
\label{e98990}
\end{equation}
\end{thm}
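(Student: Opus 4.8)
The plan is to track the radius of the circle bounding the evader (``danger'') region at the start of each iteration, call it $R_k$ with $R_0$ the given initial radius, and to show that one iteration transforms $R_k$ into $R_{k+1}$ by an affine recursion that I can then iterate in closed form. First I would analyze a single iteration in two phases. In the circular phase each agent traverses its angular sector of width $\frac{2\pi}{n}$ at speed $V_s$ along the circle of radius $R_k$, taking time $\frac{2\pi R_k}{nV_s}$; during this time the uncleaned interior expands radially outward at rate $V_T$, while the outer half of each sensor (length $r$) guarantees via the confinement condition $(\ref{e22})$ that nothing escapes. In the inward phase the agents advance toward the center to re-establish contact with the boundary of the now slightly larger interior danger zone; because the agent moves inward while this boundary moves outward, the gap closes at relative speed $V_s + V_T$. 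Carrying the expansion bookkeeping through both phases should yield a recursion of the form $R_{k+1} = \alpha R_k + \beta$ with $\alpha = 1 + \frac{2\pi V_T}{n(V_s+V_T)}$ and a negative constant $\beta$, equivalently a unique fixed point $R^{*} = \frac{nrV_s}{2\pi V_T}$.

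Next I would solve the linear recursion explicitly as $R_k = R^{*} + (R_0 - R^{*})\alpha^{k}$. The hypothesis $V_s > V_c$ from $(\ref{e24})$ is exactly the statement $R_0 < R^{*}$, so $R_0 - R^{*} < 0$ and, since $\alpha > 1$, the sequence $R_k$ is strictly decreasing, as required for the process to terminate. Imposing the stopping condition $R_{N} \le r$ and solving for the smallest admissible integer gives $N = \left\lceil \frac{\ln\frac{r - R^{*}}{R_0 - R^{*}}}{\ln \alpha} \right\rceil$; substituting $R^{*} = \frac{nrV_s}{2\pi V_T}$ and clearing the common factor $2\pi V_T$ from the numerator and denominator of the logarithm's argument reproduces $(\ref{e98986})$. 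At this point the residual evader region is contained in a disk of radius at most $r$, which the single additional sweep described in the statement cleans completely.

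For the timing I would sum the per-iteration contributions separately. The circular traversal times sum to $T_{circular}(n) = \sum_k \frac{2\pi R_k}{nV_s}$ over the $N_n$ iterations plus the final sweep; using the identity $\frac{2\pi R^{*}}{nV_s} = \frac{r}{V_T}$, the constant part of $R_k$ contributes the term linear in $N_n$ (namely $\frac{r(N_n - 1)}{V_T}$), while the geometric part $(R_0 - R^{*})\alpha^{k}$ sums to the $\alpha^{N_n}$ term in $(\ref{e98990})$, and the remaining constants together with $\frac{2\pi r}{nV_s}$ come from the terminal sweep. For $T_{in}(n)$ I would either telescope the inward radial displacements $R_k - R_{k+1}$ or, equivalently, subtract the circular time from the total iteration time; dividing by the appropriate speed and adding the final inward run that clears the central disk yields $(\ref{e98989})$, and combining the two pieces gives $(\ref{e98988})$.

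The main obstacle is the continuous-expansion bookkeeping in the single-iteration analysis: one must pin down exactly how much the interior grows during the circular sweep and during the inward advance, and in particular justify that the inward closing occurs at relative speed $V_s + V_T$ whereas the arc is traversed at $V_s$ --- this asymmetry is precisely what produces the coefficient $\frac{2\pi V_T}{n(V_s+V_T)}$ and the mixed $(V_s+V_T)$ factors, and getting it slightly wrong corrupts both $N_n$ and the time formulas. A secondary difficulty is the careful treatment of the ceiling and the off-by-one between the $N_n$ full iterations and the terminal sweep in the geometric sums. As a consistency check I would verify that the resulting asymptotic cleaning rate is compatible with the universal lower bound of Theorem 1 and that every displayed quantity stays positive precisely when $V_s > V_c$.
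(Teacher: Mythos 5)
Your proposal is correct and follows essentially the same route as the paper: the same two-phase analysis of an iteration (circular traversal at speed $V_s$, then inward advance closing the gap at relative speed $V_s+V_T$), the same affine recursion $R_{k+1}=\left(1+\tfrac{2\pi V_T}{n(V_s+V_T)}\right)R_k-\tfrac{rV_s}{V_s+V_T}$ with fixed point $R^{*}=\tfrac{nrV_s}{2\pi V_T}$, the same ceiling argument for $N_n$, and the same decomposition of the total time into the summed circular sweeps plus the summed inward advances, the final inward run $R_{N_n}/V_s$, and the terminal sweep $\tfrac{2\pi r}{nV_s}$. The only difference is algebraic bookkeeping: you sum the closed-form solution $R_k=R^{*}+(R_0-R^{*})\alpha^{k}$ directly and telescope the inward displacements $(R_k-R_{k+1})/V_s$, whereas the paper reaches the identical expressions by manipulating the recursive sums in its Appendices C--F; your evaluation is arguably cleaner but not a different method.
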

\begin{proof}
Let us denote by $\Delta V >0$ the addition to the sweeper's velocity above the critical velocity. The sweeper's velocity is therefore given by, $V_s = V_c + \Delta V$. The time it takes each sweeper to circularly sweep the region it is responsible to sweep is given by,
\begin{equation}
{T_{circular}}_i = \frac{{2\pi {R_i}}}{{n({V_c} + \Delta V)}}
\label{e25}
\end{equation}
Since $V_s = V_c + \Delta V$, ${T_{circular}}_i$ can also be expressed as,
\begin{equation}
{T_{circular}}_i = \frac{{2\pi {R_i}}}{{n{V_s}}}
\label{e20}
\end{equation}
Depending on the number of sweepers and the iteration number we have that the distance a sweeper can advance inwards after completing an iteration is given by,
\begin{equation}
{\delta _i}(\Delta V) = r - {V_T}{T_{circular}}_i
\label{e1090}
\end{equation}
Where in the term ${\delta _i}(\Delta V)$, $\Delta V$ denotes the increase in the agent velocity relative to the critical velocity, and $i$ denotes the number of sweep iterations the sweeper performed around the evader region, where $i$ starts from sweep number $0$. Since a sweeper cannot advance after each iteration by a distance that is larger than its sensor length and still prevent the escape of an evader with an arbitrary trajectory, ${\delta _i}(\Delta V)$ is bounded between,
\begin{equation}
0 \le {\delta _i}(\Delta V) \le r
\label{e26}
\end{equation}
The time it takes the sweepers to move inwards until half of their sensors are over the evader region depends on the relative velocity between the agents inwards entry and the evader region outwards expansion and is given by (\ref{e1068}). Therefore the distance an agent can advance inwards after completing an iteration is given by,
\begin{equation}
{\delta _{{i_{eff}}}}(\Delta V) = {\delta _i}(\Delta V)\left( {\frac{{{V_s}}}{{{V_s} + {V_T}}}} \right)
\label{e500}
\end{equation}
The inward advancement time depends on the iteration number. It is denoted by $T_{i{n_i}}$ and is given by,
\begin{equation}
{T_{i{n_i}}} = \frac{{{\delta _{{i_{eff}}}}(\Delta V)}}{{{V_s}}} = \frac{{rn{V_s} - 2\pi {R_i}{V_T}}}{{n{V_s}\left( {{V_s} + {V_T}} \right)}}
\label{e1068}
\end{equation}
Where the index $i$ in ${T_{i{n_i}}}$ denotes the iteration number in which the advancement is done. After all agents complete their sweep the evader region is bounded by a circle with a smaller radius compared to the previous sweep. Thus the new radius of the circle that will bound the evader region is given by,
\begin{equation}
{R_{i + 1}} = {R_i} - {\delta _{{i_{eff}}}}(\Delta V) = {R_i} - {\delta _i}(\Delta V)\left( {\frac{{{V_s}}}{{{V_s} + {V_T}}}} \right)
\label{e1069}
\end{equation}
Plugging the value of ${\delta _i}(\Delta V)$ from (\ref{e1090}) into (\ref{e1069}) yields,
\begin{equation}
{R_{i + 1}} = {R_i} - {\delta _{{i_{eff}}}}(\Delta V) = {R_i} - \frac{{r{V_s}}}{{{V_s} + {V_T}}} + \frac{{2\pi {R_i}{V_T}}}{{n\left( {{V_s} + {V_T}} \right)}}
\label{e1093}
\end{equation}
Rearranging terms yields,
\begin{equation}
{R_{i + 1}} = {R_i}\left( {1 + \frac{{2\pi {V_T}}}{{n\left( {{V_s} + {V_T}} \right)}}} \right) - \frac{{r{V_s}}}{{{V_s} + {V_T}}}
\label{e1094}
\end{equation}
For any number of even sweepers, $n$, the search continues in this way until the evader region is confined to a radius of $\widehat{{R_N}} = r$.
Denoting the coefficients $c_1$ and $c_3$ by,
\begin{equation}
{c_3} = 1 + \frac{{2\pi {V_T}}}{{n\left( {{V_s} + {V_T}} \right)}},{c_1} =  - \frac{{r{V_s}}}{{{V_s} + {V_T}}}
\label{e1095}
\end{equation}
Thus (\ref{e1094}) takes the form of,
\begin{equation}
{R_{i + 1}} = {c_3}{R_i} + {c_1}
\label{e1300}
\end{equation}
The number of iterations it takes the sweeper swarm to reduce the evader region to be bounded by a circle with a radius of $\widehat{{R_N}}=r$, that corresponds to the last sweep before completely cleaning the evader region is calculated in Appendix $A$. It is given by,
\begin{equation}
N_n = \left\lceil {\frac{{\ln \left( {\frac{{\widehat{{R_N}} - \frac{{{c_1}}}{{1 - {c_3}}}}}{{{R_0} - \frac{{{c_1}}}{{1 - {c_3}}}}}} \right)}}{{\ln {c_3}}}} \right\rceil
\label{e11}
\end{equation}
Substitution of coefficients in (\ref{e11}) yields that the number of iterations it takes the sweepers to reduce the evader region to be contained in a circle with the radius of the last scan, $\widehat{R_N} = r$, is given by,
\begin{equation}
{N_n} = \left\lceil {\frac{{\ln \left( {\frac{{2\pi r{V_T} - nr{V_s}}}{{2\pi {R_0}{V_T} - nr{V_s}}}} \right)}}{{\ln \left( {1 + \frac{{2\pi {V_T}}}{{n\left( {{V_s} + {V_T}} \right)}}} \right)}}} \right\rceil
\label{e1019}
\end{equation}
The total time it takes the  multi-agent swarm of $n$ sweepers to clean the evader region is given by total time of inward advancements combined with the times it takes the sweepers to complete the circular traversal of the evader region in all cycles. We denote by ${T_{in}}(n)$ the sum of all the inward advancement times and by ${T_{circular}}(n)$ the sum of all the circular traversal times. Namely we have that,
\begin{equation}
T(n) = {T_{in}}(n) + {T_{circular}}(n)
\label{e1101}
\end{equation}
We denote the total advancement time until the evader region is bounded by a circle with a radius that is less than or equal to $r$ as $\widetilde{{T_{in}}}(n)$. It is given by,
\begin{equation}
\widetilde{{T_{in}}}(n) = \sum\limits_{i = 0}^{{N_n}-2} {{T_{i{n_i}}}}
\label{e1099}
\end{equation}
During the inward advancements only the tip of the sensor, that has zero width, is inserted into the evader region. Therefore it does not detect any evaders until it completes its inward advance and starts sweeping again. After the sweeper completed its advance into the evader region its sensor footprint over the evader region is equal to $r$. The total search time until the evader region is bounded by a circle with a radius that is less than or equal to $r$ is given by the sum of the total circular sweep times and the times of the inward advances. Namely,
\begin{equation}
\widetilde{T}(n) = \widetilde{{T_{in}}}(n) + \widetilde{{T_{circular}}}(n)
\label{e1081}
\end{equation}
Using the developed term for $T_{i{n_i}}$ the total inward advancement times until the evader region is bounded by a circle with a radius that is less than or equal to $r$ are computed by,
\begin{equation}
\widetilde{{T_{in}}}(n) = \sum\limits_{i = 0}^{{N_n} - 2} {{T_{i{n_i}}} = } \frac{{\left( {{N_n} - 1} \right)r}}{{{V_s} + {V_T}}} - \frac{{2\pi {V_T}\sum\limits_{i = 0}^{{N_n} - 2} {{R_i}} }}{{n{V_s}\left( {{V_s} + {V_T}} \right)}}
\label{e1020}
\end{equation}
We note that the first inward advancement occurs when the evader region is bounded by a circle of radius $R_0$ and the last inward advancement occurs at iteration number ${N_n}-2$, which describes the inward advancement in which the evader region transitions from being bounded by a circle of radius ${R_{{N_n} - 2}}$ to being bounded by a circle of radius ${R_{{N_n} - 1}}$. Afterwards the sweeper swarm completes another circular sweep where after its completion the evader region is bounded by a circle of radius $R_N$. The calculation is done in this way since at the last sweep the sweeping agents advance a distance that is equal to or smaller than the allowable distance they can advance towards the center of the evader region. This occurs since we don't want the sweepers paths to cross each other. We desire that the lower tips of the sweepers' sensors will not cross the center of the evader region in order to prevent collisions between the sweeping agents at the last iteration before they completely clean the evader region. The full derivation of $\widetilde{{T_{in}}}(n)$ can be found in Appendix $F$. This derivation yields that,
\begin{equation}
\widetilde{{T_{in}}}(n) = \sum\limits_{i = 0}^{{N_n} - 2} {{T_{i{n_i}}}}  = \frac{{{R_0}}}{{{V_s}}} - \frac{{nr}}{{2\pi {V_T}}}
 - {\left( {1 + \frac{{2\pi {V_T}}}{{n\left( {{V_s} + {V_T}} \right)}}} \right)^{{N_n} - 1}}\left( {\frac{{2\pi {R_0}{V_T} - nr{V_s}}}{{2\pi {V_T}{V_s}}}} \right)
\label{e1100}
\end{equation}
In order to calculate ${T_{in}}(n)$ we must add the last inward advancement. This time is given by
\begin{equation}
{T_{_{in}last}}(n)=\frac{{{R_{{N_n}}}}}{{{V_s}}}
\label{e505}
\end{equation}
Therefore,
\begin{equation}
{T_{_{in}last}}(n)= \frac{{nr}}{{2\pi {V_T}}} + {\left( {1 + \frac{{2\pi {V_T}}}{{n\left( {{V_s} + {V_T}} \right)}}} \right)^{{N_n}}}\left( {\frac{{2\pi {R_0}{V_T} - nr{V_s}}}{{2\pi {V_T}{V_s}}}} \right)
\label{e506}
\end{equation}
${T_{in}}(n)$ is given as ${T_{in}}(n) = \widetilde{{T_{in}}}(n) + {T_{_{in}last}}(n)$ and therefore yields,
\begin{equation}
{T_{in}}(n) = \frac{{{R_0}}}{{{V_s}}} + \left( {\frac{{2\pi {R_0}{V_T} - nr{V_s}}}{{n{V_s}\left( {{V_s} + {V_T}} \right)}}} \right){\left( {1 + \frac{{2\pi {V_T}}}{{n\left( {{V_s} + {V_T}} \right)}}} \right)^{{N_n} - 1}}
\label{e507}
\end{equation}
We now proceed to the calculation of the circular sweep times. The initial circular sweep time is given by,
\begin{equation}
{T_0} = \frac{{2\pi {R_0}}}{{n{V_s}}}
\label{e31}
\end{equation}
The relation between the time to circularly sweep a circle of radius $R_i$ by an angle of $\frac{2\pi}{n}$ at a velocity of $V_s$ is given by,
\begin{equation}
{T_i} = \frac{{2\pi {R_i}}}{{n{V_s}}}
\label{e32}
\end{equation}
We denote the coefficient $c_4$ by,
\begin{equation}
{c_4} =  - \frac{{2\pi r}}{{n\left( {{V_s} + {V_T}} \right)}}
\label{e30}
\end{equation}
It can be noted that by multiplying (\ref{e1300}) by $\frac{2\pi}{n{V_s}}$ we obtain a recursive difference equation for the sweep times. Therefore the sweep times can be written as,
\begin{equation}
{T_{i + 1}} = {c_3}{T_i} + {c_4}
\label{e12}
\end{equation}
Where in this context each sweep iteration is defined as a traversal of an angle of $\frac{2\pi}{n}$ by the sweeper.
We denote the sum of circular sweep times until the evader region is bounded by a circle that is less than or equal to $r$ by $\widetilde{{T_{circular}}}$.  $\widetilde{{T_{circular}}}$ is developed in Appendix $C$ and is given by,
\begin{equation}
\widetilde{T_{circular}}(n) = \frac{{{T_0} - {c_3}{T_{N_n - 1}} + \left( {{N_n} - 1} \right){c_4}}}{{1 - {c_3}}}
\label{e8}
\end{equation}
The last circular sweep time before the evader region is bounded by a circle with a radius that is smaller or equal to $r$ is computed in Appendix $D$ and is given by,
\begin{equation}
{T_{{N_n} - 1}} = \frac{{{c_4}}}{{1 - {c_3}}} +  {{c_3}^{{N_n} - 1}}\left( {{T_0} - \frac{{{c_4}}}{{1 - {c_3}}}} \right)
\label{e10}
\end{equation}
Plugging the respective coefficients into (\ref{e10}) yields,
\begin{equation}
{T_{{N_n} - 1}} = \frac{r}{{{V_T}}} + {\left( {1 + \frac{{2\pi {V_T}}}{{n\left( {{V_s} + {V_T}} \right)}}} \right)^{{N_n} - 1}}\left( {\frac{{2\pi {R_0}{V_T} - rn{V_s}}}{{n{V_s}{V_T}}}} \right)
\label{e35}
\end{equation}
Substituting the coefficients in (\ref{e8}) with the respective developed terms yields,
\begin{equation}
\begin{array}{l}
\widetilde{{T_{circular}}} =  - \frac{{{R_0}\left( {{V_s} + {V_T}} \right)}}{{{V_T}{V_s}}} + \frac{{nr\left( {{V_s} + {V_T}} \right) + 2\pi r{V_T}}}{{2\pi {V_T}^2}}
 + {\left( {1 + \frac{{2\pi {V_T}}}{{n\left( {{V_s} + {V_T}} \right)}}} \right)^{{N_n}}}\left( {\frac{{\left( {{V_s} + {V_T}} \right)\left( {2\pi {R_0}{V_T} - rn{V_s}} \right)}}{{2\pi {V_s}{V_T}^2}}} \right)
 + \frac{{r\left( {{N_n} - 1} \right)}}{{{V_T}}}
\end{array}
\label{e1035}
\end{equation}
After the completion of sweep $N_n$ the evader region is bounded by a circle with a radius that is less than or equal to $r$. In order to prevent the prevent the paths of the sweepers from coinciding at the last sweep, the sweepers advance towards the center of the evader region until the lower tips of their sensors are at the center of the evader region. Following this advancement they perform the last circular sweep. The time to perform this sweep is denoted by ${T_{last}}(n)$. ${T_{last}}(n)$ is the time it takes the sweepers to complete the last circular sweep of radius $r$ while traversing an angle of $\frac{2\pi}{n}$ around the center of the evader region. ${T_{last}}(n)$ is given by,
\begin{equation}
{T_{last}}(n) = \frac{{2\pi r}}{{n{V_s}}}
\label{e16}
\end{equation}
Therefore the total time of circular sweeps until complete cleaning of the evader region is given by,
\begin{equation}
{T_{circular}}(n) = \widetilde{{T_{circular}}}(n) + {T_{last}}(n)
\label{e723}
\end{equation}
\begin{equation}
\begin{array}{l}
{T_{circular}}(n) =  - \frac{{{R_0}\left( {{V_s} + {V_T}} \right)}}{{{V_T}{V_s}}} + \frac{{nr\left( {{V_s} + {V_T}} \right) + 2\pi r{V_T}}}{{2\pi {V_T}^2}}
 + {\left( {1 + \frac{{2\pi {V_T}}}{{n\left( {{V_s} + {V_T}} \right)}}} \right)^{{N_n}}}\left( {\frac{{\left( {{V_s} + {V_T}} \right)\left( {2\pi {R_0}{V_T} - rn{V_s}} \right)}}{{2\pi {V_s}{V_T}^2}}} \right)
 + \frac{{r\left( {{N_n} - 1} \right)}}{{{V_T}}} + \frac{{2\pi r}}{{n{V_s}}}
\end{array}
\label{e724}
\end{equation}
\end{proof}
\begin{figure}[ht]
\noindent \centering{}\includegraphics[width=3.4in,height=2.8in]{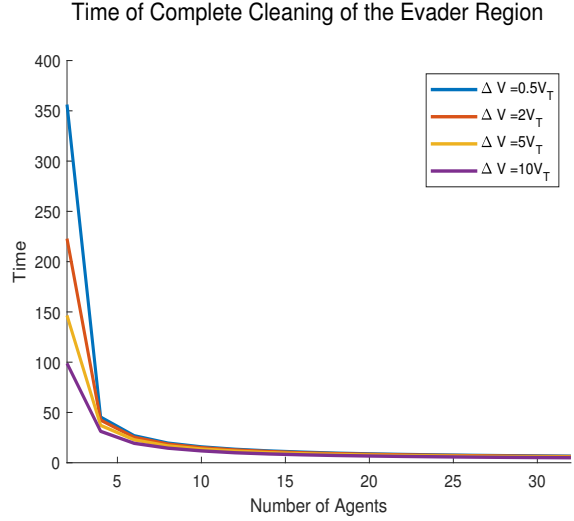} \caption{Time of complete cleaning of the evader region. In this figure we simulated the circular sweep processes for an even number of agents, ranging from $2$ to $32$ agents, that employ the multi-agent circular sweep process. The parameters values chosen for this plot are $r=10$, $V_T = 1$ and $R_0 = 100$.}
\label{Fig4Label}
\end{figure}

\begin{figure}[ht]
\noindent \centering{}\includegraphics[width=3.4in,height=2.8in]{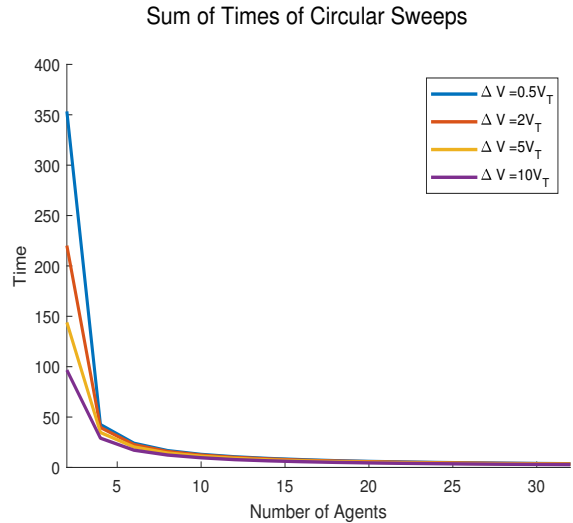} \caption{Sum of the circular sweep times of the search until complete cleaning of the evader region. In this figure we simulated the circular sweep processes for an even number of agents, ranging from $2$ to $32$ agents, that employ the multi-agent circular sweep process. The parameters values chosen for this plot are $r=10$, $V_T = 1$ and $R_0 = 100$.}
\label{Fig5Label}
\end{figure}

\begin{figure}[ht]
\noindent \centering{}\includegraphics[width=3.4in,height=2.8in]{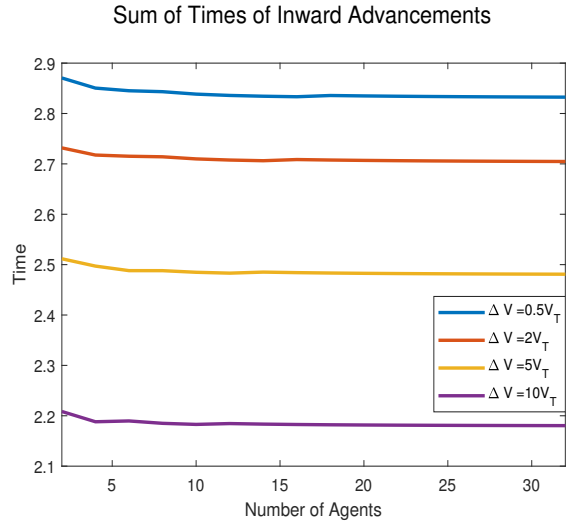} \caption{Sum of the inward advancement times until complete cleaning of the evader region. In this figure we simulated the circular sweep processes for an even number of agents, ranging from $2$ to $32$ agents, that employ the multi-agent circular sweep process. The parameters values chosen for this plot are $r=10$, $V_T = 1$ and $R_0 = 100$.}
\label{Fig6Label}
\end{figure}

\begin{figure}[ht]
\noindent \centering{}\includegraphics[width=3.4in,height=2.8in]{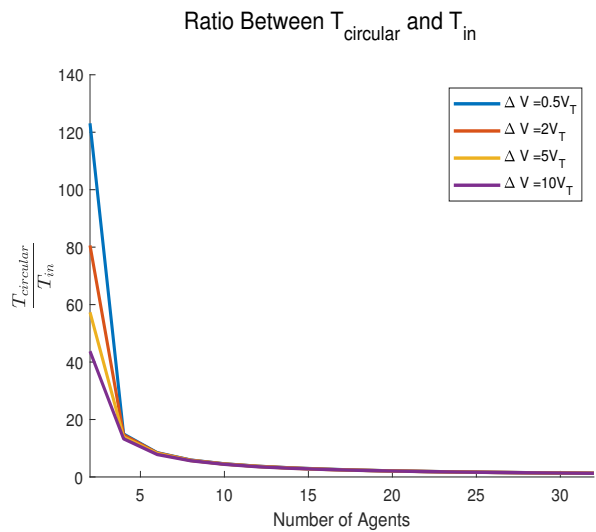} \caption{Ratio between the circular sweep times of the search and the inward advancement times until complete cleaning of the evader region. In this figure we simulated the circular sweep processes for an even number of agents, ranging from $2$ to $32$ agents, that employ the multi-agent circular sweep process. The parameters values chosen for this plot are $r=10$, $V_T = 1$ and $R_0 = 100$.}
\label{Fig7Label}
\end{figure}

\begin{figure}[ht]
\noindent \centering{}\includegraphics[width=3.4in,height=2.8in]{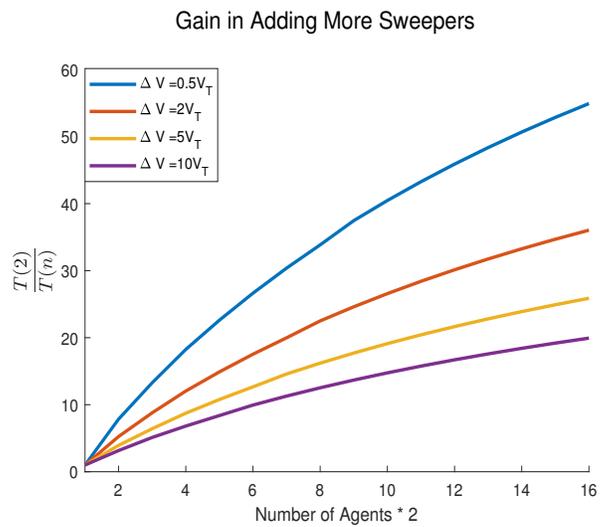} \caption{Gain in cleaning time obtained by adding more sweepers. In this figure we simulated the circular sweep processes for an even number of agents, denoted by $n$, ranging from $2$ to $32$ agents that employ the multi-agent circular sweep process. In each of the curves, every point is obtained by the ratio between the sweep times of a $2$ agent swarm and an $n$ agent sweeper swarm. We show the results obtained for different values of velocities above the circular critical velocity, i.e. different choices for $\Delta V$. The parameters values chosen for this plot are $r=10$, $V_T = 1$ and $R_0 = 100$.}
\label{Fig8Label}
\end{figure}
\section{Multiple Agents with Linear Sensors: the Spiral Sweep Process}
Since at the start of every circular sweep process half of the sweepers' sensors are outside of the evader region we would like the sweepers to employ a more efficient motion throughout the cleaning process. This means that throughout the motion of the searcher the footprint of its sensor that is above the evader region will be maximal. This can be achieved with a spiral scan, where the agent sensor tracks the expanding evader region wavefront, while preserving its shape to be as close as possible to a circle. An illustration of the initial placement of $2$ agents that employ the spiral sweep process is presented in Fig. $9$.
\begin{figure}[ht]
\noindent \centering{}\includegraphics[width=3in,height =2.5in]{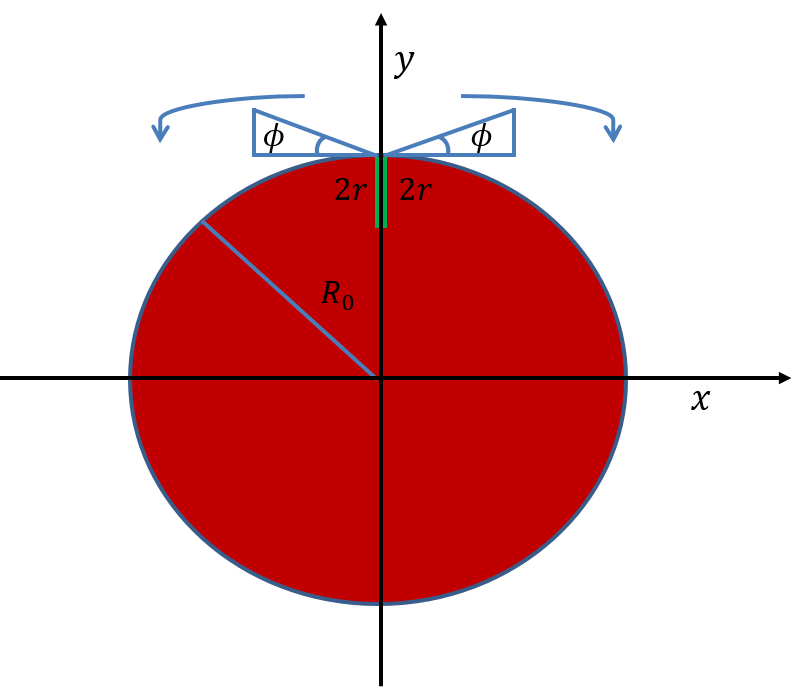} \caption{Initial placement of $2$ agents employing the spiral sweep process.}
\label{Fig9Label}
\end{figure}
Each sweeper in the swarm has a line sensor of length $2r$. We choose that all sweepers will have sensors of equal lengths. This choice implies that when the sweepers sensors reach the same point and are tangent to each other, escape will not be possible from the gap between the sensors. Such a gap is prevented when using sensors of equal length between all agents. This is a necessary requirement since the process that is described below relies on the symmetry between the agents sensors that are over the evader region. In the single agent search problem described in \cite{francos2019search}, we observed that there can be escape from point $P=(0,R_0)$ when basing the searcher's velocity only on a single traversal around the evader region. Therefore we had to increase the agent's critical velocity to deal with this possible escape. If we were to distribute a multi-agent swarm say, equally along the boundary of the initial evader region, we would have the same problem of possible escape from the points adjacent to the starting of the sweepers. We wish the sweepers to have the lowest possible critical velocity, hence we propose a different idea for the search process. The idea is to have pairs of sweeping agents move out in opposite directions along the boundary of the evader region and sweep in a pincer movement rather than having a convoy of sweepers moving in the same direction along the boundary. Our method is readily applicable for any even number of sweepers. Pairs of sweepers start with the entire length of their sensors inside the evader region. The search process can be viewed as a $2$ dimensional search on which the actual agents travel on a plane or as a $3$ dimensional search were the sweepers are drone like agents which fly over the evader area. Evaders that intersect the sweepers sensors are detected. From any point in which they are located, evaders can move in any direction at a maximal velocity of $V_T$. The sweepers are positioned back to back with each other. One sweeper in the pair moves counter clockwise while the other sweeper in the pair moves clockwise. In case the search is planar, once the sweepers meet, i.e. their sensors are again back to back at another point, they switch directions and change the direction in which they move. That is, the agent that traveled counter clockwise will now travel clockwise and vice versa. For example, if the search if carried out by two sweepers, after the first sweep this switching point is located at $(0,-R_0)$. This changing of directions occurs every time a sweeper bumps into another. Each sweeper is responsible for an angular sector of the evader region that is proportional to the number of participating agents in the search. In case the search is $3$ dimensional, where the sweepers fly at different heights above the evader region, every time a sweeper is directly above another, they exchange the angular section they are responsible to sweep between them and continue the search. The analysis of the two cases is exactly the same. We choose the sweepers sensors to have equal length in order to benefit from the symmetry implied by the trajectory of the pair. The combination of the trajectory and equal sensors lengths ensures that when the agents sensors reach the same point there will not be escape from the gap between the sensors. We analyze the case that the multi-agent swarm consists of $n$ agents, where $n$ is an even number, and each sweeper has a sensor length of $2r$. At the beginning of the search process the footprint of each sweeper's sensor that is over the evader region is equal to $2r$. When employing this type of search pattern the symmetry between the two agent trajectories prevents the escape from point $P=(0,R_0)$ that is the most dangerous point an evader can escape from as proved in the single agent scenario described in \cite{francos2019search}. Therefore each sweeper's critical velocity can be based only upon the time it takes it to traverse the angular section it responsible for, namely $\frac{2\pi}{n}$. For example, in the scenario that the sweeper swarm consists of only two sweepers each sweeper is required to scan an angle of $\pi$. As is in the case of the single agent search, if the sweepers velocities are above the critical velocity of the scenario the agents can advance inwards towards the center of the evader region after completing a cycle. For the multi agent case the notion of a cycle or an iteration corresponds to an agent's traversal of the angular section it is required to scan. Once the agents finish scanning the angular section they are responsible for, and if their velocities are greater then the critical velocity that corresponds to the scenario they advance inwards together. For the planar search, only after the inward advancement do the sweepers change the direction of scanning. For the $3$ dimensional search the sweepers advance inwards together and after that exchange between them the angular section they are responsible to sweep and continue to scan a section with a smaller radius. Each searcher begins its spiral traversal with the tip of its sensor tangent to the edge of the evader region. In order to keep its sensor tangent to the evader region, the searcher must travel at angle $\phi$ to the normal of the evader region. $\phi$ is calculated by,
\begin{equation}
\sin \phi  = \frac{{{V_T}}}{{{V_s}}}
\label{e1000}
\end{equation}
Thus we have,
\begin{equation}
\phi  = \arcsin \left( {\frac{{{V_T}}}{{{V_s}}}} \right)
\label{e1001}
\end{equation}
This method of traveling at angle $\phi$ preserves the evader region circular shape. Since the agent travels along the perimeter of the evader region and due to isoperimeteric inequality that states that for a given area the shape of the curve that bounds this area which will have the smallest perimeter is the circle this method will ensure that the time it takes to complete a sweep around the evader region will be minimal. The agent's angular velocity or rate of change of its angle with respect to the center of the evader region,$\theta _s$, can be described as a function of $\phi$ as,
\begin{equation}
\frac{{d{\theta _s}}}{{dt}} = \frac{{{V_s}\cos \phi }}{{{R_s}(t)}} = \frac{{\sqrt {{V_s}^2 - {V_T}^2} }}{{{R_s}(t)}}
\label{e1002}
\end{equation}
And the instantaneous growth rate of the searcher radius will be given by,
\begin{equation}
\frac{{d{R_s}(t)}}{{dt}} = {V_S}\sin \phi  = {V_T}
\label{e1003}
\end{equation}
Integrating equation (\ref{e1002}) between the initial and final sweep times of the angular section yields,
\begin{equation}
\int_0^{{t_\theta }} {\dot \theta \left( \zeta  \right)} d\zeta  = \int_0^{{t_\theta }} {\frac{{\sqrt {{V_s}^2 - {V_T}^2} }}{{{V_T}\zeta  + {R_0} - r}}d} \zeta
\label{e36}
\end{equation}
The result of the integral in (\ref{e36}) yields,
\begin{equation}
\theta \left( {{t_\theta }} \right) = \frac{{\sqrt {{V_s}^2 - {V_T}^2} }}{{{V_T}}}\ln \left( {\frac{{{V_T}{t_\theta } + {R_0} - r}}{{{R_0} - r}}} \right)
\label{e15}
\end{equation}
Applying the exponent function to both sides of the equation results in,
\begin{equation}
\left( {{R_0} - r} \right){e^{\frac{{{V_T}}}{{\sqrt {{V_s}^2 - {V_T}^2} }}}} = {V_T}t + {R_0} - r = {R_s}(t)
\label{e38}
\end{equation}
Each sweeper begins its spiral traversal with the tip of its sensor tangent to the edge of the evader region at point $P =(0,R_0)$. The time it takes the sweeper to complete a spiral traversal around the angular region of the evader region it is responsible to scan corresponds to changing its angle $\theta$ by $\frac{2\pi}{n}$. During this time the expansion of the evader region has to be by no more than $2r$ from its initial radius, in order for the sweeper to prevent the escape of potential evaders. This assertion holds under the assumption that after each cycle when the sweeper advances inwards toward the center of the evader region it completes this motion in zero time. Otherwise the spread of evaders has be less than $2r$ and considerations such as the spread of evaders during the inwards motion needs to be taken into account. This case will be addressed after the analysis of the simplified case that is described here. In order for no evader to escape the sweepers after a traversal of $\frac{2\pi}{n}$ the following inequality must hold,
\begin{equation}
{R_0} + r \ge {R_s}(t)
\label{e39}
\end{equation}
Substituting ${R_s}(t)$ with the expression of the trajectory of the center of the sweeper yields,
\begin{equation}
{R_0} + r \ge \left( {{R_0} - r} \right){e^{\frac{{2\pi {V_T}}}{{n\sqrt {{V_s}^2 - {V_T}^2} }}}}
\label{e42}
\end{equation}
Therefore in order to have a no escape sweep process the sweepers' velocities have to satisfy,
\begin{equation}
{V_S} \ge {V_T}\sqrt {\frac{{{{\left( {\frac{{2\pi }}{n}} \right)}^2}}}{{{{\left( {\ln \left( {\frac{{{R_0} + r}}{{{R_0} - r}}} \right)} \right)}^2}}} + 1}
\label{e43}
\end{equation}
We now propose a modification to the construction of the critical velocity given in (\ref{e43}). This modification takes into account the consideration that when the sweepers travel towards the center of the evader region after completing the spiral sweep they have to meet the evader wavefront travelling outwards the region with a speed of $V_T$ at the previous radius $R_0$. This more realistic update of the search process makes the spiral sweep process critical velocity agree with the optimal lower bound on the sweeper velocity that is independent of the sweep process and is slightly above it. We have that the expansion of the evader region during the first sweep denoted by ${T_c}$, has to satisfy that,
\begin{equation}
{V_T}{T_c} \leq \frac{{2r{V_s}}}{{{V_s} + {V_T}}}
\label{e48}
\end{equation}
Substituting the expression for ${T_c}$, yields
\begin{equation}
\left( {{R_0} - r} \right)\left( {{e^{\frac{{2\pi {V_T}}}{{n\sqrt {{V_s}^2 - {V_T}^2} }}}} - 1} \right) = \frac{{2r{V_s}}}{{{V_s} + {V_T}}}
\label{e49}
\end{equation}
These considerations are formulated in Corollary $1$.
\newtheorem{corollary}{Corollary}
\begin{corollary}
In the spiral sweep process for the $n$ agent case, where $n$ is even, the critical velocity ,${V_s}$, that allows the satisfaction of the confinement task, is obtained as the solution of,
\begin{equation}
{V_T}{T_c} = \frac{{2r{V_s}}}{{V_s} + {V_T}}
\label{e98996}
\end{equation}
Where ${T_c}$ is given by
\begin{equation}
{T_c} = \frac{{\left( {{R_0} - r} \right)\left( {{e^{\frac{{2\pi {V_T}}}{{n\sqrt {{V_s}^2 - {V_T}^2} }}}} - 1} \right)}}{{{V_T}}}
\label{e98997}
\end{equation}
\end{corollary}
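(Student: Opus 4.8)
The plan is to assemble the corollary from two facts already prepared in the discussion preceding it: an explicit formula for the spiral traversal time $T_c$ of a single angular sector, and a confinement constraint that relates $T_c$ to the largest radial expansion the sweepers can tolerate. Both statements in the corollary are really repackagings of (\ref{e48}) and (\ref{e49}), so the work is to derive each cleanly and then combine them, rather than to introduce new machinery.

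First I would establish (\ref{e98997}). Starting from the angular-velocity relation (\ref{e1002}) --- which holds because each sweeper keeps the tip of its length-$2r$ sensor tangent to the expanding wavefront by travelling at angle $\phi = \arcsin(V_T/V_s)$ to the normal --- I would integrate $\dot\theta$ over one sector, using that the instantaneous radius grows linearly, $R_s(\zeta) = V_T\zeta + R_0 - r$, by (\ref{e1003}). This reproduces (\ref{e15}). Setting the accumulated angle equal to the sector angle $\frac{2\pi}{n}$ that each of the $n$ sweepers must cover and solving for the elapsed time $t_\theta = T_c$, exponentiation gives $V_T T_c = (R_0 - r)\bigl(e^{2\pi V_T/(n\sqrt{V_s^2 - V_T^2})} - 1\bigr)$, which is exactly (\ref{e98997}). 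Note that the footprint here is the full $2r$ (the whole sensor starts inside the region), in contrast to the footprint $r$ of the circular process; this is what puts a $2r$ rather than an $r$ into the budget below.

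Next I would derive the confinement equation (\ref{e98996}). The governing principle is the no-escape requirement: over one sweep the wavefront may expand radially by at most the slack in the sensor footprint. Unlike the simplified bound leading to (\ref{e43}) --- which merely demanded $R_s(t) \le R_0 + r$, i.e. expansion at most $2r$ under the assumption that the subsequent inward move is instantaneous --- the realistic accounting must charge for the inward advance: while the sweeper re-enters at speed $V_s$, the evader front it must recapture is simultaneously moving outward at speed $V_T$, so the usable fraction of the $2r$ budget is scaled by $\frac{V_s}{V_s + V_T}$, exactly as in the circular analysis that produced (\ref{e500}) and (\ref{e1068}). This yields the inequality (\ref{e48}), $V_T T_c \le \frac{2r V_s}{V_s + V_T}$, and the smallest admissible velocity is obtained at equality, giving (\ref{e98996}); substituting (\ref{e98997}) then reproduces the transcendental equation (\ref{e49}) whose root is the critical velocity.

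The main obstacle is justifying the relative-velocity factor $\frac{V_s}{V_s+V_T}$ together with the implicit claim that the very first sweep, carried out at the largest radius, is the binding case. For the factor I would argue in the frame co-moving with the wavefront, where the closing dynamics during the non-instantaneous inward maneuver contribute the reduction $V_s/(V_s+V_T)$ in precisely the same way as for the circular process; this is the one place where the ``more realistic'' update genuinely changes the bound relative to (\ref{e43}). For the monotonicity, I would note that every later sector is swept at a strictly smaller radius, so $T_c$ and hence the required expansion margin only decrease; therefore a velocity guaranteeing confinement on the initial sweep guarantees it throughout, exactly as remarked after Theorem~1. Once these two points are secured the corollary follows immediately by reading off (\ref{e98996}) and (\ref{e98997}).
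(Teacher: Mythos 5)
Your proposal is correct and takes essentially the same route as the paper: the paper's own justification of Corollary $1$ is exactly the discussion preceding it, namely the derivation of the sector traversal time via the spiral integration (\ref{e36})--(\ref{e38}) giving (\ref{e98997}), followed by the ``more realistic'' confinement budget (\ref{e48})--(\ref{e49}) in which the $2r$ footprint is scaled by $\frac{V_s}{V_s+V_T}$ to account for the wavefront expanding during the non-instantaneous inward advance, with the critical velocity taken at equality. Your two added justifications --- the relative-velocity argument for the factor $\frac{V_s}{V_s+V_T}$ (mirroring (\ref{e500}) and (\ref{e1068})) and the observation that the first sweep, at the largest radius, is the binding case --- only make explicit what the paper leaves implicit.
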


\begin{figure}[ht]
\noindent \centering{}\includegraphics[width=3.4in,height=2.8in]{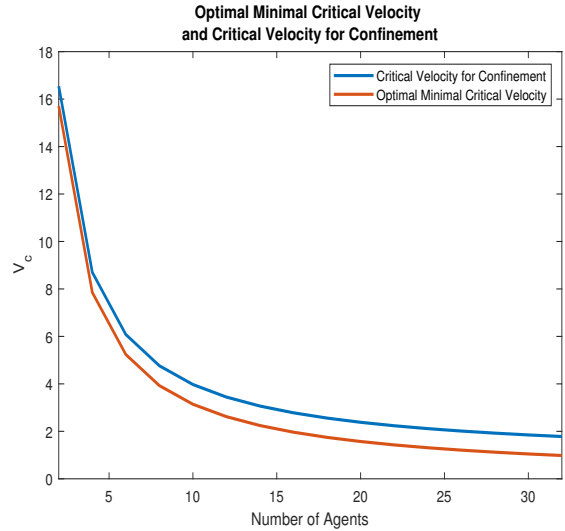} \caption{In this figure we plot the critical velocity as a function of the number of sweepers. The number of sweeper agents is even, and ranges from $2$ to $32$ agents, that employ the multi-agent spiral sweep process where the inward advancements towards the center of the evader are taken into account. We also plot the optimal critical velocity for comparison.  The parameters values chosen for this plot are $r=10$, $V_T = 1$ and $R_0 = 100$.}
\label{Fig10Label}
\end{figure}

\begin{figure}[ht]
\noindent \centering{}\includegraphics[width=3.4in,height=2.8in]{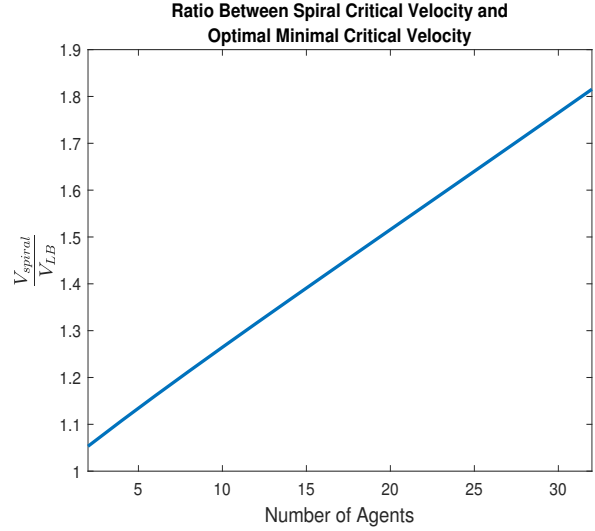} \caption{In this figure we plot the ratio between the spiral critical velocity and the optimal critical velocity as a function of the number of sweepers. The number of sweeper agents is even, and ranges from $2$ to $32$ agents, that employ the multi-agent spiral sweep process where the inward advancements towards the center of the evader are taken into account.  The parameters values chosen for this plot are $r=10$, $V_T = 1$ and $R_0 = 100$.}
\label{Fig11Label}
\end{figure}

\begin{thm}
For an $n$ agent swarm for which $n$ is even, that performs the spiral sweep process, where the sweepers distribution is as described, the number of iterations it will take the swarm to clean the entire evader region is given by,
\begin{equation}
\begin{array}{l}
\widetilde{{N_n}} = {N_n} + \eta + 1   = \\
\left\lceil {\frac{{\ln \left( {\frac{{r\left( {3 - {e^{\frac{{2\pi {V_T}}}{{n\sqrt {{V_s}^2 - {V_T}^2} }}}}} \right)}}{{{R_0}\left( {1 - {e^{\frac{{2\pi {V_T}}}{{n\sqrt {{V_s}^2 - {V_T}^2} }}}}} \right) + r\left( {1 + {e^{\frac{{2\pi {V_T}}}{{n\sqrt {{V_s}^2 - {V_T}^2} }}}}} \right)}}} \right)}}{{\ln \left( {\frac{{{V_T} + {V_s}{e^{\frac{{2\pi {V_T}}}{{n\sqrt {{V_s}^2 - {V_T}^2} }}}}}}{{{V_s} + {V_T}}}} \right)}}} \right\rceil +\eta + 1
\end{array}
\label{e98992}
\end{equation}
Where $\eta=0$, or $\eta =1$
\\
We denote by ${T_{in}}(n)$ the sum of all the inward advancement times and by ${T_{spiral}}(n)$ the sum of all the spiral traversal times. Therefore the time it takes the swarm to clean the entire evader region is given by,
\begin{equation}
T(n) = {T_{in}}(n) + {T_{spiral}}(n)
\label{e98993}
\end{equation}
Where ${T_{in}}(n)$ is given by,
\begin{equation}
{T_{in}}(n) = \widetilde{{T_{in}}}(n) + {T_{_{in}last}}(n) + \eta {T_{i{n_f}}}(n)
\label{e769}
\end{equation}
Where $\widetilde{{T_{in}}}(n)$ is given by,
\begin{equation}
\begin{array}{l}
\widetilde{T_{in}}(n) = \frac{{2r}}{{{V_s} + {V_T}}} + \frac{{{R_0} - r}}{{{V_s}}} + \frac{{2r\left( {{V_T} + {V_s}{e^{\frac{{2\pi {V_T}}}{{n\sqrt {{V_s}^2 - {V_T}^2} }}}}} \right)}}{{{V_s}\left( {{V_s} + {V_T}} \right)\left( {1 - {e^{\frac{{2\pi {V_T}}}{{n\sqrt {{V_s}^2 - {V_T}^2} }}}}} \right)}}\\
 - \frac{{{{\left( {{V_T} + {V_s}{e^{\frac{{2\pi {V_T}}}{{n\sqrt {{V_s}^2 - {V_T}^2} }}}}} \right)}^{{N_n} - 1}}}}{{{V_s}\left( {{V_s} + {V_T}} \right)\left( {1 - {e^{\frac{{2\pi {V_T}}}{{n\sqrt {{V_s}^2 - {V_T}^2} }}}}} \right)}}\left( {{R_0}\left( {1 - {e^{\frac{{2\pi {V_T}}}{{n\sqrt {{V_s}^2 - {V_T}^2} }}}}} \right) + r\left( {1 + {e^{\frac{{2\pi {V_T}}}{{n\sqrt {{V_s}^2 - {V_T}^2} }}}}} \right)} \right)
\end{array}
\label{e772}
\end{equation}
${T_{_{in}last}}(n)$ is given by,
\begin{equation}
{T_{_{in}last}}(n) = \frac{R_N}{V_s}
\label{e773}
\end{equation}
And ${T_{i{n_f}}}(n)$ is given by,
\begin{equation}
{T_{i{n_f}}}(n) = \frac{{{T_l}{V_T}}}{{{V_s}}}
\label{e768}
\end{equation}
And therefore,
\begin{equation}
{T_{in}}(n) = \widetilde{{T_{in}}}(n) + \frac{{{R_{N}}}}{{{V_s}}} + \frac{{\eta r}}{{{V_s}}}\left( {{e^{\frac{{2\pi {V_T}}}{{n\sqrt {{V_s}^2 - {V_T}^2} }}}} - 1} \right)
\label{e774}
\end{equation}
${T_{spiral}}(n)$ is given by,
\begin{equation}
{T_{spiral}}(n) = \widetilde{{T_{spiral}}}(n) + {T_{last}}(n)+ \eta {T_l}(n)
\label{e775}
\end{equation}
Where $\widetilde{{T_{spiral}}}(n)$ is given by,
\begin{equation}
\begin{array}{l}
\widetilde{{T_{spiral}}}(n) = \frac{{\left( {r - {R_0}} \right)\left( {{V_s} + {V_T}} \right)}}{{{V_T}{V_s}}} - \frac{{2r\left( {{V_T} + {V_s}{e^{\frac{{2\pi {V_T}}}{{n\sqrt {{V_s}^2 - {V_T}^2} }}}}} \right)}}{{{V_T}{V_s}\left( {1 - {e^{\frac{{2\pi {V_T}}}{{n\sqrt {{V_s}^2 - {V_T}^2} }}}}} \right)}} \\ - {\left( {\frac{{{V_T} + {V_s}{e^{\frac{{2\pi {V_T}}}{{n\sqrt {{V_s}^2 - {V_T}^2} }}}}}}{{{V_s} + {V_T}}}} \right)^{{N_n}}}\left( {\frac{{\left( {{V_s} + {V_T}} \right)\left( {{R_0}\left( {{e^{\frac{{2\pi {V_T}}}{{n\sqrt {{V_s}^2 - {V_T}^2} }}}} - 1} \right) - r\left( {{e^{\frac{{2\pi {V_T}}}{{n\sqrt {{V_s}^2 - {V_T}^2} }}}} + 1} \right)} \right)}}{{{V_T}{V_s}\left( {1 - {e^{\frac{{2\pi {V_T}}}{{n\sqrt {{V_s}^2 - {V_T}^2} }}}}} \right)}}} \right) + \frac{{2r\left( {{N_n} - 1} \right)}}{{{V_T}}}
\end{array}
\label{e776}
\end{equation}
${T_{last}}(n)$  is given by,
\begin{equation}
{T_{last}}(n) = \frac{{2\pi r}}{{n{V_s}}}
\label{e778}
\end{equation}
And ${T_l}(n)$ is given by,
\begin{equation}
{T_l}(n) = \frac{{r\left( {{e^{\frac{{2\pi {V_T}}}{{n\sqrt {{V_s}^2 - {V_T}^2} }}}} - 1} \right)}}{{{V_T}}}
\label{e767}
\end{equation}
And therefore ${T_{spiral}}(n)$ is given by,
\begin{equation}
{T_{spiral}}(n) = \widetilde{{T_{spiral}}}(n) + \frac{{2\pi r}}{{{n{V_s}}}} + \eta \frac{{r\left( {{e^{\frac{{2\pi {V_T}}}{{n\sqrt {{V_s}^2 - {V_T}^2} }}}} - 1} \right)}}{{{V_T}}}
\label{e779}
\end{equation}
\end{thm}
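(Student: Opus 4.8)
The plan is to mirror the structure of the proof of Theorem 2, replacing the circular sweep geometry by the spiral geometry of (\ref{e38})--(\ref{e49}). Writing $\beta = \frac{2\pi V_T}{n\sqrt{V_s^2 - V_T^2}}$ for brevity, I first fix an iteration index $i$ and let $R_i$ be the radius of the circle bounding the evader region at the start of the $i$-th spiral sweep ($R_0$ the initial radius). The trajectory (\ref{e38}) together with the radial growth law (\ref{e1003}) shows that one traversal of the sector $\frac{2\pi}{n}$ from $R_i$ takes $T_{c,i} = \frac{(R_i - r)(e^\beta - 1)}{V_T}$, i.e. the expression of Corollary 1 with $R_0$ replaced by $R_i$, during which the wavefront spreads by $V_T T_{c,i} = (R_i - r)(e^\beta - 1)$. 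Because the usable footprint now has radial extent $2r$ rather than $r$, the admissible inward step before correcting for expansion is $\delta_i = 2r - V_T T_{c,i}$, and, exactly as in (\ref{e500})--(\ref{e1068}), the effective step and its duration are $\delta_{i_{eff}} = \delta_i \frac{V_s}{V_s + V_T}$ and $T_{i{n_i}} = \frac{\delta_{i_{eff}}}{V_s}$.

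Next I would assemble and solve the recursion. Substituting $V_T T_{c,i} = (R_i - r)(e^\beta - 1)$ into $R_{i+1} = R_i - \delta_{i_{eff}}$ and collecting the coefficient of $R_i$ gives the affine recurrence $R_{i+1} = \gamma R_i - \frac{r V_s(1 + e^\beta)}{V_s + V_T}$ with $\gamma = \frac{V_T + V_s e^\beta}{V_s + V_T}$, the spiral analogue of (\ref{e1300}). Being first order and linear, it is solved as in Appendix A by $R_i = \gamma^i(R_0 - R^*) + R^*$ about the fixed point $R^* = \frac{r(1 + e^\beta)}{e^\beta - 1}$. The last full spiral sweep must reduce the bounding radius to $2r$ (twice the circular target, since the spiral footprint is $2r$): setting $R_{N_n} = 2r$, solving for the exponent and taking the ceiling as in (\ref{e11}) reproduces $N_n$ of (\ref{e98992}). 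Indeed $\frac{R_{N_n} - R^*}{R_0 - R^*} = \frac{r(3 - e^\beta)}{R_0(1 - e^\beta) + r(1 + e^\beta)}$ and $\ln \gamma$ appears in the denominator, which is a useful consistency check that the coefficients are right.

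Then I would evaluate the two time sums. The inward total $\widetilde{T_{in}}(n) = \sum_{i=0}^{N_n - 2} T_{i{n_i}}$ becomes a geometric series in $\gamma^i$ once the closed form of $R_i$ is inserted, and collapses to (\ref{e772}); adding the final descent $\frac{R_N}{V_s}$ and, when required, $\eta T_{i{n_f}}(n)$ yields $T_{in}(n)$. For the spiral total the clean device is to eliminate $R_i$ in favour of the increment $R_i - R_{i+1}$: the recursion gives $T_{c,i} = \frac{2r}{V_T} - \frac{(V_s + V_T)}{V_s V_T}(R_i - R_{i+1})$, so $\widetilde{T_{spiral}}(n) = \sum_{i=0}^{N_n - 2} T_{c,i}$ telescopes to $\frac{2r(N_n - 1)}{V_T} - \frac{(V_s + V_T)(R_0 - R_{N_n - 1})}{V_s V_T}$, which after substituting the closed form of $R_{N_n - 1}$ is exactly (\ref{e776}); this telescoping is precisely what produces the $\frac{2r(N_n - 1)}{V_T}$ term. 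Appending the concluding sweep $T_{last}(n) = \frac{2\pi r}{n V_s}$ and the optional $\eta T_l(n)$ gives $T_{spiral}(n)$, and $T(n) = T_{in}(n) + T_{spiral}(n)$ follows.

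The main obstacle is the endgame bookkeeping carried by $\eta \in \{0, 1\}$ and the trailing $+1$. Once the $N_n$-th sweep leaves a residual disk of radius $R_{N_n} \le 2r$, one must decide whether a single concluding spiral sweep through the center cleans it or whether one extra intermediate sweep of a radius-$2r$ region is needed first; this depends on where $R_{N_n}$ lands in $(r, 2r]$ and fixes $\eta$ together with the correction times $T_l(n) = \frac{r(e^\beta - 1)}{V_T}$ and $T_{i{n_f}}(n) = \frac{T_l V_T}{V_s}$. Everything else reduces to the same recurrence-and-geometric-series computation as in Theorem 2; the only genuinely new analytic ingredients are the spiral traversal time from (\ref{e38}) and the telescoping identity for $T_{c,i}$, so I would verify those two carefully and treat the remaining algebra as routine.
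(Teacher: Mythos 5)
Your skeleton is the paper's: the same affine recursion (you work with $R_i$ directly while the paper uses the shifted variable $\widetilde{R_i}=R_i-r$; since the ratio $({R_{N_n}-R^*})/({R_0-R^*})$ is invariant under that shift, your derivation of $N_n$ is equivalent and correct), the same decomposition $T(n)=T_{in}(n)+T_{spiral}(n)$, and the same geometric-series evaluation of $\widetilde{T_{in}}(n)=\sum_{i=0}^{N_n-2}T_{in_i}$. But your spiral sum contains a genuine off-by-one error. The quantity $\widetilde{T_{spiral}}(n)$ in (\ref{e776}) is the sum of \emph{all} $N_n$ spiral sweeps, $\sum_{i=0}^{N_n-1}T_i$: this is what (\ref{e100}) together with Appendix C states, it is what the process requires (the region only reaches radius $\le 2r$ after the sweep performed at radius $R_{N_n-1}$), and it is what makes the count $\widetilde{N_n}=N_n+\eta+1$ come out right -- indeed your own endgame sentence ("once the $N_n$-th sweep leaves a residual disk") presupposes $N_n$ sweeps. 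You instead sum $\sum_{i=0}^{N_n-2}T_{c,i}$, and your telescoped answer $\frac{2r(N_n-1)}{V_T}-\frac{(V_s+V_T)(R_0-R_{N_n-1})}{V_sV_T}$ falls short of (\ref{e776}) by exactly the omitted last sweep $T_{N_n-1}=\frac{(R_{N_n-1}-r)\left(e^{\beta}-1\right)}{V_T}>0$, where $\beta=\frac{2\pi V_T}{n\sqrt{V_s^2-V_T^2}}$ is your shorthand: expanding (\ref{e776}) shows it equals $\frac{2rN_n}{V_T}-\frac{(V_s+V_T)(R_0-R_{N_n})}{V_sV_T}$, i.e.\ your formula with the range extended through $i=N_n-1$. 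The superficial agreement of the single term $\frac{2r(N_n-1)}{V_T}$ is what misled you; the remaining constant terms of (\ref{e776}) absorb an extra $\frac{2r}{V_T}$. The telescoping device itself is sound, and arguably cleaner than the paper's Appendix C/D machinery, but it must be applied over $i=0,\dots,N_n-1$.

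The second gap is the endgame that fixes $\eta$ and the trailing $+1$: you flag it as the main obstacle but do not carry it out, and your sketch of it is not the paper's argument. $\eta$ is not determined by "where $R_{N_n}$ lands in $(r,2r]$" alone, and the concluding sweep is circular, not "a concluding spiral sweep through the center". In the paper, after the $N_n$ spiral sweeps the agents always descend until the lower sensor tips reach the center (time $R_N/V_s$) and always finish with a circular sweep of radius $r$ (time $\frac{2\pi r}{nV_s}$); $\eta$ is decided by the no-escape inequality (\ref{e703}), $2r\ge V_T T_{last}+V_T T_{\mathrm{in\,last}}+R_N$, equivalently by whether $V_s\ge\frac{2\pi rV_T+nV_TR_N}{n\left(2r-R_N\right)}$ as in (\ref{e622}). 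Only when this velocity condition holds may the circular sweep follow immediately ($\eta=0$); otherwise an extra spiral sweep of duration $T_l(n)$ starting from the center position, plus the extra advancement $T_{in_f}(n)$, must be inserted before it ($\eta=1$). Deriving this inequality is a substantive piece of the proof rather than routine algebra, so as written your argument establishes neither (\ref{e98992}) nor the $\eta$-dependent formulas (\ref{e775}), (\ref{e779}).
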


\begin{proof}
Let us denote by $\Delta V >0$ the addition to the sweeper's velocity above the critical velocity. The sweeper's velocity is therefore given by, $V_s = V_c + \Delta V$. The expression for the angle that the sweeper travels, denoted as $\theta \left( {{t_\theta }} \right)$, when at the beginning of the cycle the center of the sweeper's sensor is located at a distance of ${R_i} - r$ from the center of the evader region is calculated in (\ref{e15}). Replacing $R_0$ with $R_i$ yields,
\begin{equation}
\theta \left( {{t_\theta }} \right) = \frac{{\sqrt {{V_s}^2 - {V_T}^2} }}{{{V_T}}}\ln \left( {\frac{{{V_T}{t_\theta } + {R_i} - r}}{{{R_i} - r}}} \right)
\label{e46}
\end{equation}
The time it takes the sweeper to travel an angle of $\theta \left( {{t_\theta }} \right)=\frac{2\pi}{n}$ is denoted as ${T_{spiral}}_i$ and is obtained from (\ref{e46}). It is given by,
\begin{equation}
{T_{spiral}}_i = \frac{{\left( {{R_i} - r} \right)\left( {{e^{\frac{{2\pi {V_T}}}{{n\sqrt {{V_s}^2 - {V_T}^2} }}}} - 1} \right)}}{{{V_T}}}
\label{e47}
\end{equation}
Given that an agent moves in a velocity that is greater than the critical velocity for the corresponding scenario, we denote as in the circular sweep process, the distance an agent can advance towards the center of the evader region by ${\delta _i}(\Delta V)$. This will result in a new circular evader region with a radius of ${R_{i + 1}} = {R_i} - {\delta _i}(\Delta V)$.
After completing the proposed spiral sweep the evader region is again circularly shaped, with a smaller radius. A proof for this property is provided in Appendix $H$. We have that,
\begin{equation}
{\delta _i}(\Delta V) = 2r - {V_T}{T_{spiral}}_i
\label{e1015}
\end{equation}
As a function of the number of sweepers and the iteration number, we have that the distance the sweepers can advance inwards after completing an iteration in case the evader wavefront did not continue to expand during the sweepers' inward motion is given by,
\begin{equation}
{\delta _i}(\Delta V) = 2r - \left( {{R_i} - r} \right)\left( {{e^{\frac{{2\pi {V_T}}}{{n\sqrt {{V_s}^2 - {V_T}^2} }}}} - 1} \right)
\label{e1067}
\end{equation}
Where in the term ${\delta _i}(\Delta V)$, $\Delta V$ denotes the increase in the agent velocity relative to the critical velocity, and $i$ denotes the number of sweep iterations the sweepers performed around the evader region, where $i$ starts from sweep number $0$. As in the case of the circular sweep process the time it takes the sweepers to move inwards until their entire sensors are over the evader region depends on the relative velocity between the sweepers inwards entry velocities and the evader region outwards expansion velocity. Therefore the distance a sweeper can advance inwards after completing an iteration is given by,
\begin{equation}
{\delta _{{i_{eff}}}}(\Delta V) = {\delta _i}(\Delta V)\left( {\frac{{{V_s}}}{{{V_s} + {V_T}}}} \right)
\label{e700}
\end{equation}
The new radius of the smaller circular evader region will be therefore given by,
\begin{equation}
{R_{i + 1}} = {R_i} - {\delta _i}(\Delta V)\left( {\frac{{{V_s}}}{{{V_s} + {V_T}}}} \right)
\label{e1097}
\end{equation}
We denote by $\widetilde {R_i}=R_i -r$. Plugging in the value for ${{\delta _i}(\Delta V)}$ into (\ref{e1097}) results in,
\begin{equation}
\widetilde{{R_{i + 1}}} = \widetilde{{R_i}} - \left( {2r -\widetilde{{R_i}}\left( {{e^{\frac{{2\pi {V_T}}}{{n\sqrt {{V_s}^2 - {V_T}^2} }}}} - 1} \right)} \right)\left( {\frac{{{V_s}}}{{{V_s} + {V_T}}}} \right)
\label{e1072}
\end{equation}
Rearranging terms to bring the equation to the difference equation form we observed in the previous section yields,
\begin{equation}
\widetilde{{R_{i + 1}}} = \widetilde{{R_i}}\left( {\frac{{{V_T} + {V_s}{e^{\frac{{2\pi {V_T}}}{{n\sqrt {{V_s}^2 - {V_T}^2} }}}}}}{{{V_s} + {V_T}}}} \right) - \frac{{2r{V_s}}}{{{V_s} + {V_T}}}
\label{e1073}
\end{equation}
Where we denote the coefficients in (\ref{e1073}) as,
\begin{equation}
{c_3} = \frac{{{V_T} + {V_s}{e^{\frac{{2\pi {V_T}}}{{n\sqrt {{V_s}^2 - {V_T}^2} }}}}}}{{{V_s} + {V_T}}}
\label{e1074}
\end{equation}
\begin{equation}
{c_1} =  - \frac{{2r{V_s}}}{{{V_s} + {V_T}}}
\label{e1075}
\end{equation}
This yields the following difference equation,
\begin{equation}
\widetilde{{R_{i + 1}}} = {c_3}\widetilde{{R_i}} +{c_1}
\label{e345}
\end{equation}
Since the structure of the difference equation for the radius of the evader region is the same as those in the circular case described in the previous section, the number of iterations it takes the sweepers to reduce the evader region to a circle of radius $\widehat{{R_N}}=2r$, is given by,
\begin{equation}
{N_n} = \left\lceil {\frac{{\ln \left({\frac{{\widehat{{R_N}} - \frac{{{c_1}}}{{1 - {c_3}}}}}{{{R_0} - \frac{{{c_1}}}{{1 - {c_3}}}}}} \right)}}{{\ln {c_3}}}} \right\rceil
\label{e1029}
\end{equation}
${R_N}$ is the actual radius of the circular evader region whose radius is smaller or equal to $2r$ and is calculated by similar steps as ${R_{N-2}}$ is calculated in Appendix $B$. The precise calculation of ${R_N}$ is important for the end game of the cleaning process as is discussed later in this section. The last cycle takes place when the evader region is a circle of radius $\widehat{{R_N}} = 2r$, or $\widetilde{{R_N}} = r$. Thus substitution of coefficients in (\ref{e1029}) yields that after ${N_n}$ iterations the evader region is circularly shaped with a radius that is less than or equal to $\frac{2r}{n}$. ${N_n}$ is given by,
\begin{equation}
{N_n} = \left\lceil {\frac{{\ln \left( {\frac{{r\left( {3 - {e^{\frac{{2\pi {V_T}}}{{n\sqrt {{V_s}^2 - {V_T}^2} }}}}} \right)}}{{{R_0}\left( {1 - {e^{\frac{{2\pi {V_T}}}{{n\sqrt {{V_s}^2 - {V_T}^2} }}}}} \right) + r\left( {1 + {e^{\frac{{2\pi {V_T}}}{{n\sqrt {{V_s}^2 - {V_T}^2} }}}}} \right)}}} \right)}}{{\ln \left( {\frac{{{V_T} + {V_s}{e^{\frac{{2\pi {V_T}}}{{n\sqrt {{V_s}^2 - {V_T}^2} }}}}}}{{{V_s} + {V_T}}}} \right)}}} \right\rceil
\label{e1076}
\end{equation}
The inwards advancement time depends on the iteration number. It is denoted by $T_{i{n_i}}$ and its expression is given by,
\begin{equation}
{T_{i{n_i}}} = \frac{{{\delta _{{i_{eff}}}}(\Delta V)}}{{{V_s}}} = \frac{{2r - {R_i}\left( {{e^{\frac{{2\pi {V_T}}}{{n\sqrt {{V_s}^2 - {V_T}^2} }}}} - 1} \right)}}{{{V_s} + {V_T}}}
\label{e1077}
\end{equation}
We denote the total advancement time until the evader region is reduced to a circle of radius that is less than or equal to $2r$ as $\widetilde{{T_{in}}}(n)$. It is given by,
\begin{equation}
\widetilde{{T_{in}}}(n) = \sum\limits_{i = 0}^{{N_n}-2} {{T_{i{n_i}}}}
\label{e1098}
\end{equation}
Since during the inward advancements only the tip of the sensor, that has zero width, is inserted into the evader region it does not detect evaders until it completes its inward advance and starts sweeping again. After the sweepers complete their advance into the evader region their sensor footprint over the evader region is equal to $2r$. The total search time until the evader region is reduced to a circle with a radius that is less than or equal to $2r$  is given by the sum of the total spiral sections times combined with the times of the total inward advancements. Namely,
\begin{equation}
\widetilde{T}(n) = \widetilde{{T_{in}}}(n) + \widetilde{{T_{spiral}}}(n)
\label{e1083}
\end{equation}
Using the developed term for $T_{i{n_i}}$ the total inward advancement times until the evader region is reduced to a circle with a radius that is less than or equal to $2r$ are computed by,
\begin{equation}
\widetilde{{T_{in}}}(n) = \sum\limits_{i = 0}^{{N_n} - 2} {{T_{i{n_i}}}}  = \sum\limits_{i = 0}^{{N_n} - 2} {\frac{{2r - \widetilde{{R_i}}\left( {{e^{\frac{{2\pi {V_T}}}{{n\sqrt {{V_s}^2 - {V_T}^2} }}}} - 1} \right)}}{{{V_s} + {V_T}}}}
\label{e702}
\end{equation}
The full derivation of $\widetilde{{T_{in}}}(n) = \sum\limits_{i = 0}^{{N_n} - 2} {{T_{i{n_i}}}}$ is given in Appendix $G$. We therefore have that,
\begin{equation}
\begin{array}{l}
\widetilde{T_{in}}(n) = \frac{{2r}}{{{V_s} + {V_T}}} + \frac{{{R_0} - r}}{{{V_s}}} + \frac{{2r\left( {{V_T} + {V_s}{e^{\frac{{2\pi {V_T}}}{{n\sqrt {{V_s}^2 - {V_T}^2} }}}}} \right)}}{{{V_s}\left( {{V_s} + {V_T}} \right)\left( {1 - {e^{\frac{{2\pi {V_T}}}{{n\sqrt {{V_s}^2 - {V_T}^2} }}}}} \right)}}\\
 - \frac{{{{\left( {{V_T} + {V_s}{e^{\frac{{2\pi {V_T}}}{{n\sqrt {{V_s}^2 - {V_T}^2} }}}}} \right)}^{{N_n} - 1}}}}{{{V_s}\left( {{V_s} + {V_T}} \right)\left( {1 - {e^{\frac{{2\pi {V_T}}}{{n\sqrt {{V_s}^2 - {V_T}^2} }}}}} \right)}}\left( {{R_0}\left( {1 - {e^{\frac{{2\pi {V_T}}}{{n\sqrt {{V_s}^2 - {V_T}^2} }}}}} \right) + r\left( {1 + {e^{\frac{{2\pi {V_T}}}{{n\sqrt {{V_s}^2 - {V_T}^2} }}}}} \right)} \right)
\end{array}
\label{e1084}
\end{equation}
In the last inward advancement towards the center of the evader region the sweepers advance inwards and place the lower tips of their sensors at the center of the evader region. The time it takes the sweepers to complete this inwards motion is given by,
\begin{equation}
{T_{_{in}last}}(n) = \frac{{{R_{{N}}} }}{{{V_s}}}
\label{e705}
\end{equation}
${R_{{N}}}$ is calculated by similar steps as the calculation in Appendix $B$. Recalling that $\widetilde{{R_{{N}}}} = {R_{{N}}} -r $ we have that,
\begin{equation}
\widetilde{{R_{{N}}}} = \frac{{{c_1}}}{{1 - {c_3}}} + {c_3}^{{N_n}}\left( {\widetilde{{R_0}} - \frac{{{c_1}}}{{1 - {c_3}}}} \right)
\label{e781}
\end{equation}
Substituting the coefficients in (\ref{e781}) yields,
\begin{equation}
{R_N} =  - \frac{{2r}}{{1 - {e^{\frac{{2\pi {V_T}}}{{n\sqrt {{V_s}^2 - {V_T}^2} }}}}}}
 + {c_3}^{{N_n}}\left( {\frac{{{R_0}\left( {1 - {e^{\frac{{2\pi {V_T}}}{{n\sqrt {{V_s}^2 - {V_T}^2} }}}}} \right) + r\left( {1 + {e^{\frac{{2\pi {V_T}}}{{n\sqrt {{V_s}^2 - {V_T}^2} }}}}} \right)}}{{1 - {e^{\frac{{2\pi {V_T}}}{{n\sqrt {{V_s}^2 - {V_T}^2} }}}}}}} \right)
\label{e782}
\end{equation}
Substituting the expression for ${R_{{N}}}$ in ${T_{_{in}last}}(n)$ given in (\ref{e705}) yields,
\begin{equation}
{T_{_{in}last}}(n) = - \frac{{2r}}{{{V_s}\left( {1 - {e^{\frac{{2\pi {V_T}}}{{n\sqrt {{V_s}^2 - {V_T}^2} }}}}} \right)}}
 + {c_3}^{{N_n}}\left( {\frac{{{R_0}\left( {1 - {e^{\frac{{2\pi {V_T}}}{{n\sqrt {{V_s}^2 - {V_T}^2} }}}}} \right) + r\left( {1 + {e^{\frac{{2\pi {V_T}}}{{n\sqrt {{V_s}^2 - {V_T}^2} }}}}} \right)}}{{{V_s}\left( {1 - {e^{\frac{{2\pi {V_T}}}{{n\sqrt {{V_s}^2 - {V_T}^2} }}}}} \right)}}} \right)
\label{e727}
\end{equation}
Since the time to sweep around radius $\widetilde{R_i}$ is obtained by multiplying $\widetilde{R_i}$ by $\frac{{{e^{\frac{{2\pi {V_T}}}{{n\sqrt {{V_s}^2 - {V_T}^2} }}}} - 1}}{V_T}$, when multiplying (\ref{e1073}) by  $\frac{{{e^{\frac{{2\pi {V_T}}}{{n\sqrt {{V_s}^2 - {V_T}^2} }}}} - 1}}{V_T}$ we can construct a difference equation for the sweep times. This difference equation is given by,
\begin{equation}
{T_{i + 1}} = {c_3}{T_i} + {c_4}
\label{e37}
\end{equation}
Where in this context a cycle is defined as a traversal of an angle of $\frac{2\pi}{n}$ by the sweeper. The coefficient $c_4$ is given by,
\begin{equation}
{c_4} = \frac{{ - 2r{V_s}\left( {{e^{\frac{{2\pi {V_T}}}{{n\sqrt {{V_s}^2 - {V_T}^2} }}}} - 1} \right)}}{{({V_s}+{V_T}){V_T}}}
\label{e1062}
\end{equation}
The total time of the spiral sweeps until the evader region is reduced to a circle with a radius that is equal to or smaller than $2r$ follows the derivation in Appendix $C$ and is given by,
\begin{equation}
\widetilde{{T_{spiral}}}(n) = \frac{{{T_0} - {c_3}{T_{N_n - 1}} + \left( {{N_n} - 1} \right){c_4}}}{{1 - {c_3}}}
\label{e100}
\end{equation}
Where the time of the first sweep is given by,
\begin{equation}
{T_0} = \frac{{\left( {{R_0} - r} \right)\left( {{e^{\frac{{2\pi {V_T}}}{{n\sqrt {{V_s}^2 - {V_T}^2} }}}} - 1} \right)}}{{{V_T}}}
\label{e44}
\end{equation}
The time it takes to sweep the last cycle of the search process before the evader region is reduced to a circle with a radius that is less than or equal to $2r$ is computed in Appendix $D$, and is given by,
\begin{equation}
{T_{N_n - 1}} = \frac{{{c_4}}}{{1 - {c_3}}} +  {{c_3}^{{N_n} - 1}} \left( {{T_0} - \frac{{{c_4}}}{{1 - {c_3}}}} \right)
\label{e1089}
\end{equation}
Plugging in the appropriate coefficients yields,
\begin{equation}
{T_{N - 1}} = \frac{{2r}}{{{V_T}}}
 + \frac{{{c_3}^{{N_n} - 1}}}{{{V_T}}}\left( {{R_0}\left( {{e^{\frac{{2\pi {V_T}}}{{n\sqrt {{V_s}^2 - {V_T}^2} }}}} - 1} \right) - r\left( {1 + {e^{\frac{{2\pi {V_T}}}{{n\sqrt {{V_s}^2 - {V_T}^2} }}}}} \right)} \right)
\label{e729}
\end{equation}
Substituting the derived coefficients into (\ref{e100}) yields,
\begin{equation}
\begin{array}{l}
\widetilde{{T_{spiral}}}(n) = \frac{{\left( {r - {R_0}} \right)\left( {{V_s} + {V_T}} \right)}}{{{V_T}{V_s}}} - \frac{{2r\left( {{V_T} + {V_s}{e^{\frac{{2\pi {V_T}}}{{n\sqrt {{V_s}^2 - {V_T}^2} }}}}} \right)}}{{{V_T}{V_s}\left( {1 - {e^{\frac{{2\pi {V_T}}}{{n\sqrt {{V_s}^2 - {V_T}^2} }}}}} \right)}} \\ - {\left( {\frac{{{V_T} + {V_s}{e^{\frac{{2\pi {V_T}}}{{n\sqrt {{V_s}^2 - {V_T}^2} }}}}}}{{{V_s} + {V_T}}}} \right)^{{N_n}}}\left( {\frac{{\left( {{V_s} + {V_T}} \right)\left( {{R_0}\left( {{e^{\frac{{2\pi {V_T}}}{{n\sqrt {{V_s}^2 - {V_T}^2} }}}} - 1} \right) - r\left( {{e^{\frac{{2\pi {V_T}}}{{n\sqrt {{V_s}^2 - {V_T}^2} }}}} + 1} \right)} \right)}}{{{V_T}{V_s}\left( {1 - {e^{\frac{{2\pi {V_T}}}{{n\sqrt {{V_s}^2 - {V_T}^2} }}}}} \right)}}} \right) + \frac{{2r\left( {{N_n} - 1} \right)}}{{{V_T}}}
\end{array}
\label{e1086}
\end{equation}
After completing sweep number ${{N_n} - 1}$ the sweepers advance toward the center of the evader region until the lower tips of their sensors are located at the center of the evader region. Following this advance the sweepers need to perform a circular sweep of radius $r$ around the center of the evader region and complete the cleaning of the evader region. The sweepers can complete this last circular sweep only if their velocities are high enough so that during the circular motion no evader escapes the sweepers. Since the critical velocity for a spiral sweep is lower than the critical velocity for a circular sweep the sweepers need to perform the last circular after spiral sweep number ${{N_n} - 1}$ only if their velocities satisfy the following inequality,
\begin{equation}
2r \ge {V_T}{T_{last}} + {V_T}{T_{_{in}last}} + {R_{N}}
\label{e703}
\end{equation}
Satisfying (\ref{e703}) means that no evader escapes the sweepers. Before the last sweep the evader region is reduced to a circle of radius ${R_{N}}$ that satisfies,
\begin{equation}
0 < {R_{N}} \le 2r
\label{e704}
\end{equation}
An alternative way to represent ${R_{{N}}}$ is,
\begin{equation}
{R_{{N}}} = r\left( {2 - \varepsilon } \right)
\label{e618}
\end{equation}
Therefore $\varepsilon$ can be written as,
\begin{equation}
\varepsilon  = \frac{{2r - {R_N}}}{r}
\label{e621}
\end{equation}
Where,
\begin{equation}
0 \le \varepsilon  < 2
\label{e711}
\end{equation}
The last circular sweep occurs after the sweepers advance towards the center of the evader region and place the lower tips of their sensors at the center of the evader region. The last sweep is therefore a circular sweep of an angle of $\frac{2\pi}{n}$  around a circle with a radius of $r$ that is centered at the center of the evader region. The time it takes the sweepers to complete it is given by,
\begin{equation}
{T_{last}}(n) = \frac{{2\pi r}}{{n{V_s}}}
\label{e619}
\end{equation}
Therefore in order to perform the last circular sweep directly after spiral sweep number ${{N_n} - 1}$, the inequality in (\ref{e703}), that can be written as,
\begin{equation}
\varepsilon  \ge \frac{{2\pi {V_T} + n{V_T}\left( {2 - \varepsilon } \right)}}{{n{V_s}}}
\label{e708}
\end{equation}
has to hold. Therefore in order to perform the last circular sweep directly after spiral sweep number ${{N_n} - 1}$, $V_s$ has to satisfy that,
\begin{equation}
{V_s} \ge \frac{{2\pi {V_T} + n{V_T}\left( {2 - \varepsilon } \right)}}{{n\varepsilon }}
\label{e709}
\end{equation}
Substituting $\varepsilon$ in (\ref{e709}) with the expression for $\varepsilon$ from (\ref{e621}) yields that in order to perform the last circular sweep directly after spiral sweep number ${{N_n} - 1}$, $V_s$ has to satisfy that,
\begin{equation}
{V_s} \ge \frac{{2\pi r{V_T} + n{V_T}{R_N}}}{{n\left( {2r - {R_N}} \right)}}
\label{e622}
\end{equation}
Rearranging terms and denoting the smallest possible $\varepsilon$ that satisfies (\ref{e709}) as ${\varepsilon _c}$, yields that,
\begin{equation}
{\varepsilon _c} \ge \frac{{2{V_T}\left( {\pi  + n} \right)}}{{n\left( {{V_s} + {V_T}} \right)}}
\label{e710}
\end{equation}
Therefore if the radius of the circular evader region after sweep number ${{N_n} - 1}$ satisfies,
\begin{equation}
{R_N} \ge r\left( {2 - {\varepsilon _c}} \right)
\label{e712}
\end{equation}
or,
\begin{equation}
{R_N} \ge \frac{{2rn{V_s} - 2{V_T}\pi r}}{{{n^2}\left( {{V_s} + {V_T}} \right)}}
\label{e713}
\end{equation}
Then the sweepers velocity is not sufficient to guarantee escape from the evader region. The demand that ${{R_{{N}}}}$ will satisfy the inequality in (\ref{e713}) is equivalent to the demand that if $V_s$ is not high enough and does not satisfy the inequality in (\ref{e622}) then the sweepers velocity is not sufficient to guarantee escape. Therefore the sweepers have to perform another spiral sweep, that starts when the lower tips of their sensors are located at the center of the evader region. This spiral sweep starts when the center of each sweeper is at a distance of $r$ from the center of the region and the time it takes to complete it is denoted by ${T_l}(n)$. It is given by,
\begin{equation}
{T_l}(n) = \frac{{r\left( {{e^{\frac{{2\pi {V_T}}}{{n\sqrt {{V_s}^2 - {V_T}^2} }}}} - 1} \right)}}{{{V_T}}}
\label{e715}
\end{equation}
We therefore introduce a characteristic function named $\eta$ that takes the values of $1$ and $0$. If the additional spiral sweep needs to be performed $\eta =1$ and therefore ${T_l}(n)$ is added to the sweep time and if no additional spiral sweep is needed $\eta =0$. Therefore the general term for ${T_{spiral}}(n)$ is given by,
\begin{equation}
{T_{spiral}}(n) = \widetilde{{T_{spiral}}}(n) + {T_{last}}(n) + \eta {T_l}(n)
\label{e719}
\end{equation}
${T_{in}}(n)$ is given by the sum,
\begin{equation}
{T_{in}}(n) = \widetilde{{T_{in}}}(n) + {T_{_{in}last}}(n) + \eta {T_{i{n_f}}}(n)
\label{e717}
\end{equation}
Let us denote by ${T_{i{n_f}}}(n)$ the time of the inward advancement of the sweepers that corresponds to the spread of possible evaders that originated at the center of the evader region at the beginning of the last spiral sweep and had time of ${T_l}(n)$ to spread from the center at a velocity of $V_T$. Therefore ${T_{i{n_f}}}(n)$ is given by,
\begin{equation}
{T_{i{n_f}}}(n) = \frac{{{T_l}(n){V_T}}}{{{V_s}}}
\label{e716}
\end{equation}
And the total times of inward advancements is given by,
\begin{equation}
{T_{in}}(n) = \widetilde{{T_{in}}}(n) + \frac{{{R_{N}}}}{{{V_s}}} + \frac{{\eta r\left( {{e^{\frac{{2\pi {V_T}}}{{n\sqrt {{V_s}^2 - {V_T}^2} }}}} - 1} \right)}}{{{V_s}}}
\label{e718}
\end{equation}
Substituting the terms in (\ref{e719}) yields,
\begin{equation}
{T_{spiral}}(n) = \widetilde{{T_{spiral}}}(n) + \frac{{2\pi r}}{{{n}{V_s}}} + \frac{{\eta r\left( {{e^{\frac{{2\pi {V_T}}}{{n\sqrt {{V_s}^2 - {V_T}^2} }}}} - 1} \right)}}{{{V_T}}}
\label{e720}
\end{equation}
\end{proof}

\begin{figure}[ht]
\noindent \centering{}\includegraphics[width=3.4in,height=2.8in]{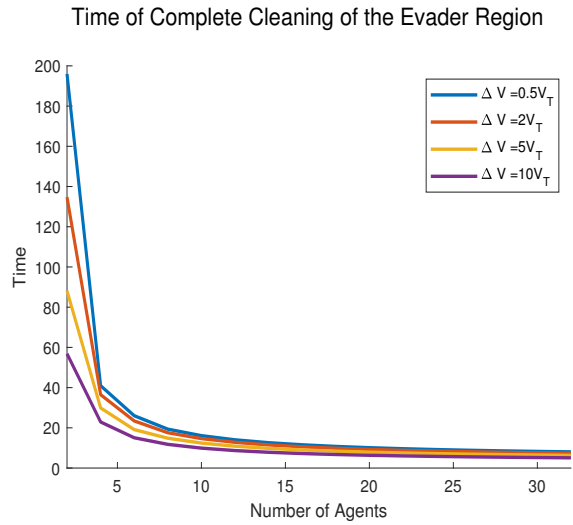} \caption{Time of complete cleaning of the evader region. In this figure we simulated the sweep processes for an even number of agents, ranging from $2$ to $32$ agents that employ the multi-agent spiral sweep process. We show the results obtained for different values of velocities above the spiral critical velocity, i.e. different choices for $\Delta V$. The parameters values chosen for this plot are $r=10$, $V_T = 1$ and $R_0 = 100$.}
\label{Fig12Label}
\end{figure}

\begin{figure}[ht]
\noindent \centering{}\includegraphics[width=3.4in,height=2.8in]{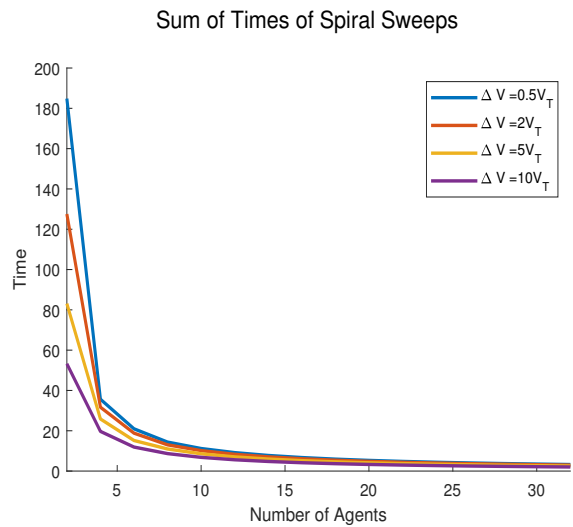} \caption{Sum of the spiral sweep times of the search until complete cleaning of the evader region. In this figure we simulated the sweep processes for an even number of agents, ranging from $2$ to $32$ agents that employ the multi-agent spiral sweep process. We show the results obtained for different values of velocities above the spiral critical velocity, i.e. different choices for $\Delta V$. The parameters values chosen for this plot are $r=10$, $V_T = 1$ and $R_0 = 100$.}
\label{Fig13Label}
\end{figure}

\begin{figure}[ht]
\noindent \centering{}\includegraphics[width=3.4in,height=2.8in]{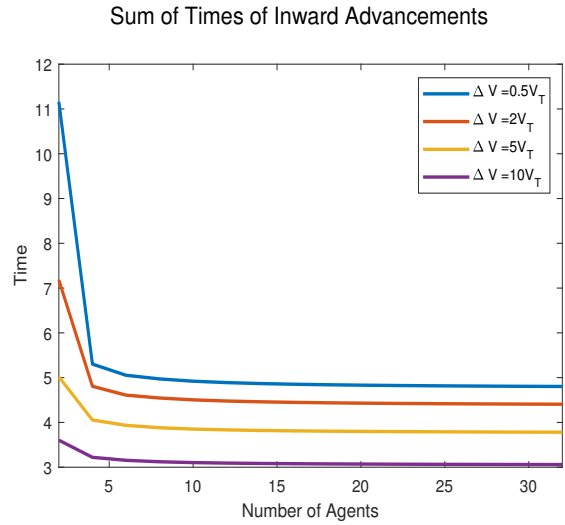} \caption{Sum of the inward advancement times until complete cleaning of the evader region. In this figure we simulated the sweep processes for an even number of agents, ranging from $2$ to $32$ agents that employ the multi-agent spiral sweep process. We show the results obtained for different values of velocities above the spiral critical velocity, i.e. different choices for $\Delta V$. The parameters values chosen for this plot are $r=10$, $V_T = 1$ and $R_0 = 100$.}
\label{Fig14Label}
\end{figure}

\begin{figure}[ht]
\noindent \centering{}\includegraphics[width=3.4in,height=2.8in]{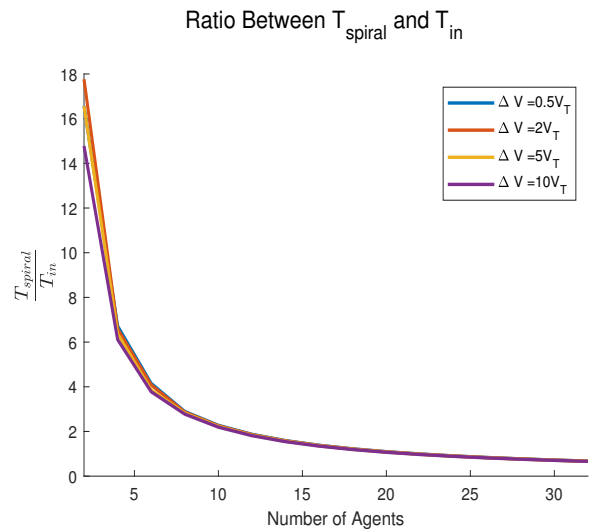} \caption{Ratio between the spiral sweep times of the search and the inward advancement times until complete cleaning of the evader region. In this figure we simulated the sweep processes for an even number of agents, ranging from $2$ to $32$ agents that employ the multi-agent spiral sweep process. We show the results obtained for different values of velocities above the spiral critical velocity, i.e. different choices for $\Delta V$. The parameters values chosen for this plot are $r=10$, $V_T = 1$ and $R_0 = 100$.}
\label{Fig15Label}
\end{figure}

\begin{figure}[ht]
\noindent \centering{}\includegraphics[width=3.4in,height=2.8in]{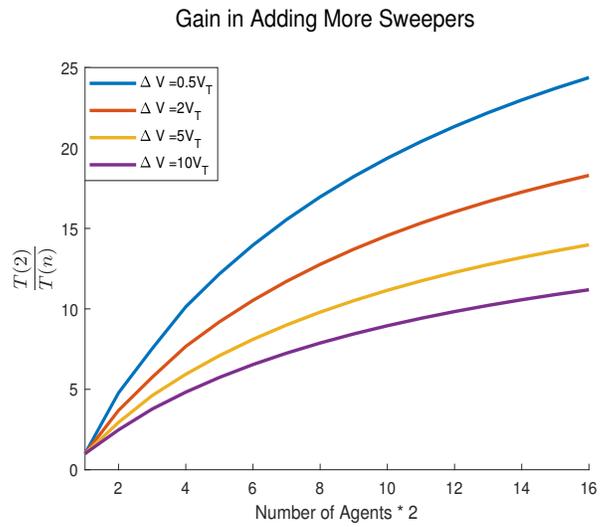} \caption{Gain in cleaning time obtained by adding more sweepers. In this figure we simulated the sweep processes for an even number of agents, denoted by $n$, ranging from $2$ to $32$ agents that employ the multi-agent spiral sweep process. In each of the curves, every point is obtained by the ratio between the sweep times of a $2$ agent swarm and an $n$ agent sweeper swarm. We show the results obtained for different values of velocities above the spiral critical velocity, i.e. different choices for $\Delta V$. The parameters values chosen for this plot are $r=10$, $V_T = 1$ and $R_0 = 100$.}
\label{Fig16Label}
\end{figure}
\section{Comparative Analysis of the Proposed Search Strategies}
The purpose of this section is to compare between the obtained results for the circular and spiral sweep processes that were developed in the previous sections. In order to compare the total sweep times of sweeper swarms that can perform both the circular and spiral sweep processes the number of sweepers and sweeper velocity must be the same in each of the tested spiral and circular swarms in order to make a fair comparison. The critical velocity that is required for sweepers that perform the circular sweep process is higher than the minimal critical velocity that is required for sweepers that perform the spiral sweep process. Therefore we show the results obtained for different values of velocities above the critical velocity that correspond to the circular sweep process. This is done in order to compare the times it takes a multi agent sweeper swarm to clean the entire evader region when moving at the same velocity as the sweepers that employ the corresponding circular sweep. This implies that there are entire regions of operation where an evader area with a given radius could be cleaned by a sweeper swarm that performs the spiral sweep process but cannot be cleaned by a sweeper swarm that performs the circular sweep process. In figures $17$- $18$ we plot the performance of the spiral sweep process with velocities that are above the circular critical velocity. This means that the values of $ \Delta V$ that are mentioned in the plots correspond to sweeper velocities that are almost twice the spiral critical velocities. From comparing these results to the performance of the circular sweep process under the same velocities and number of sweeper agents we clearly see that spiral methods are much more efficient than the circular sweep methods since the times it takes the sweepers to clean the evader region are much lower. Fig. $19$. compares the cleaning times of circular sweeping swarms and spiral sweeping swarms. The results are shown on the same plot and with the same agent velocities for both the circular and spiral multi-agent sweep processes. We note the gap in terms of the times it takes a multi-agent swarm that employs the spiral sweep process to complete its mission compared to the times it takes the same swarm to complete the search when it employs the circular sweep process. Therefore in terms of completion times of the search, the spiral sweep process is favourable. This result is independent of the number of the sweepers that perform the search or the velocity in which they move.

\begin{figure*}[ht]
\begin{minipage}[b]{\linewidth}
\begin{minipage}[b]{0.5\linewidth}
\centering
\includegraphics [width=3.4in,height=2.8in]{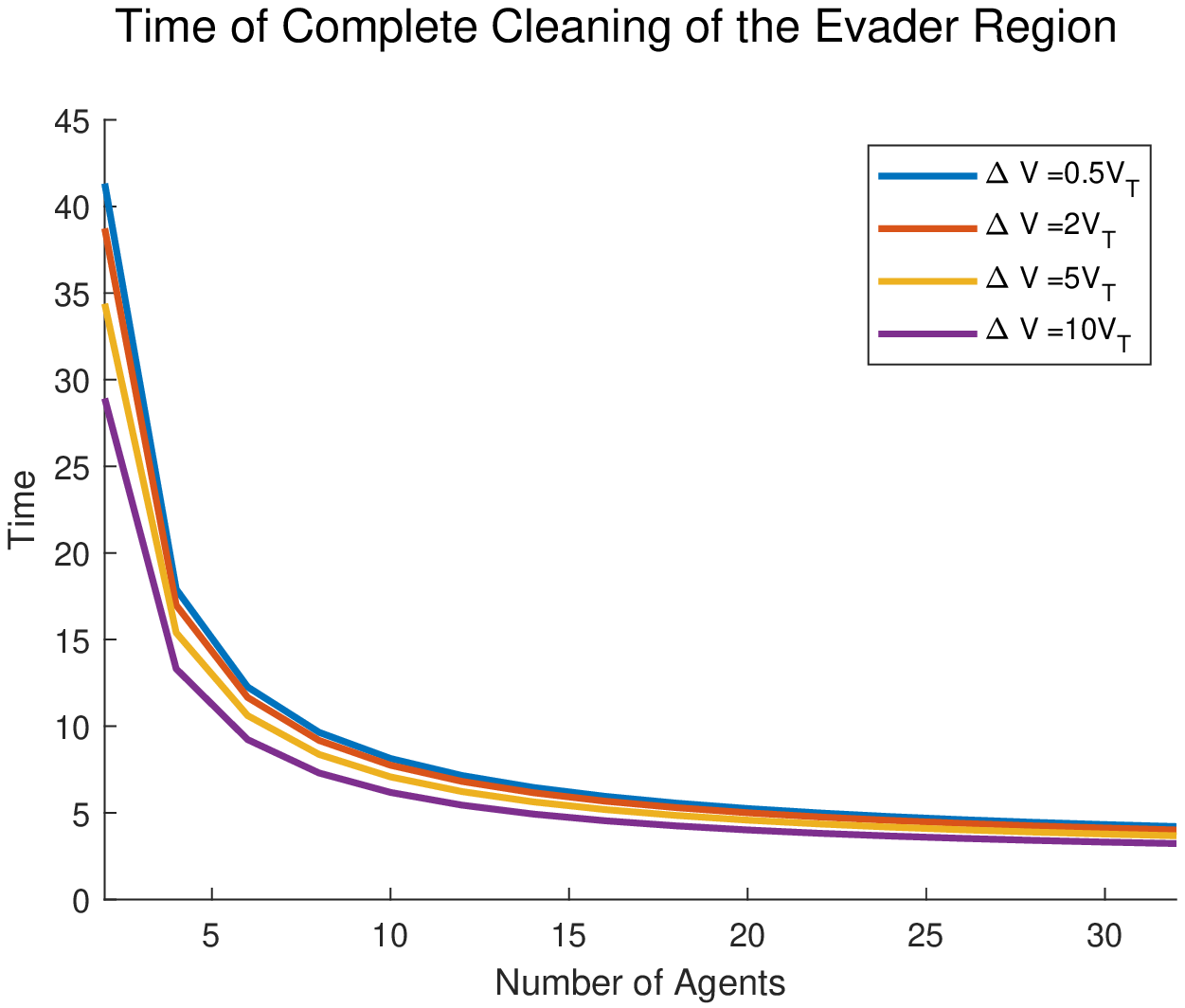}

\end{minipage}
\begin{minipage}[b]{0.5\linewidth}
\centering
\includegraphics [width=3.4in,height=2.8in]{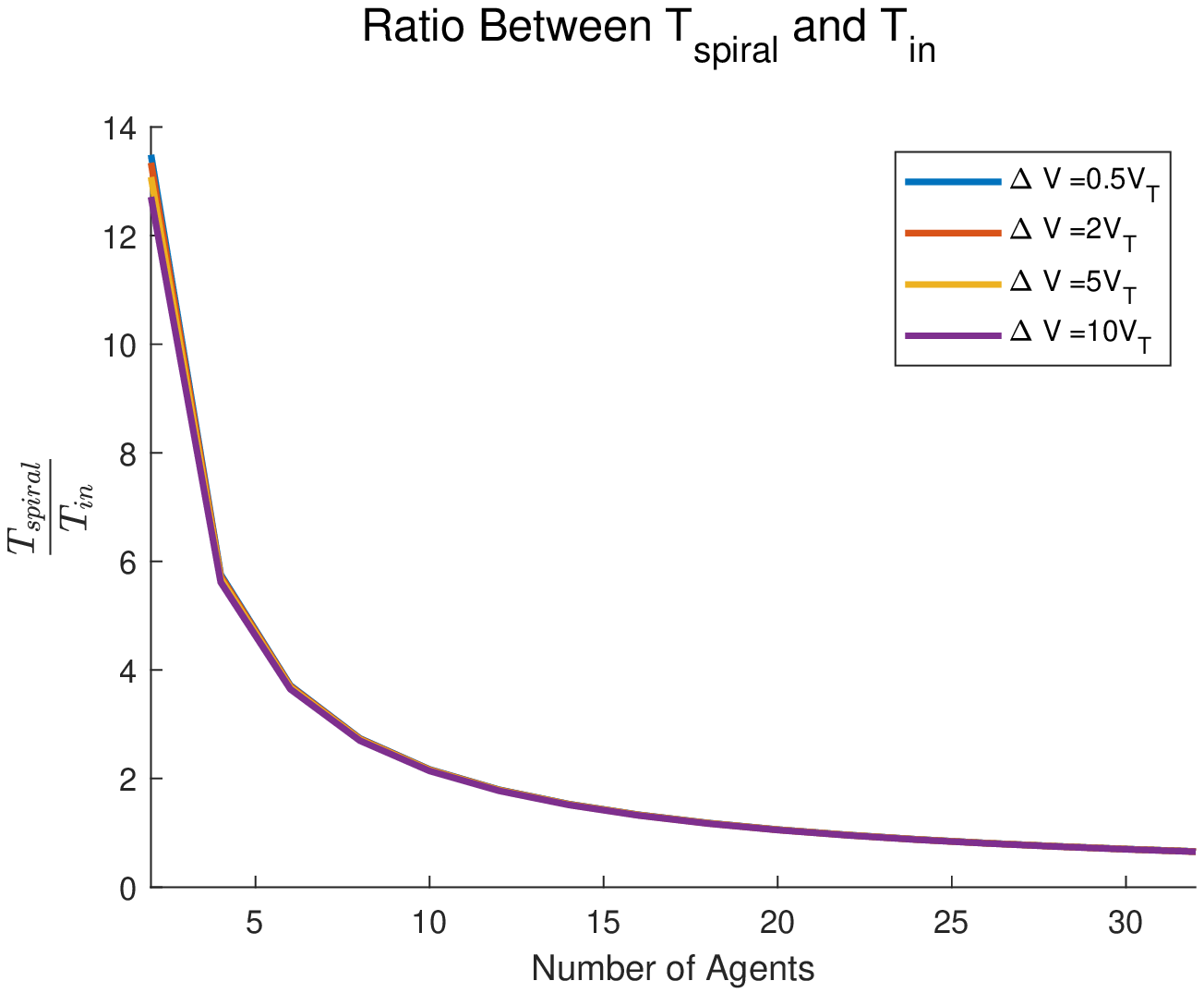}
\end{minipage}
\end{minipage}

\begin{minipage}[b]{\linewidth}
\begin{minipage}[b]{0.5\linewidth}
\centering
\includegraphics [width=3.4in,height=2.8in]{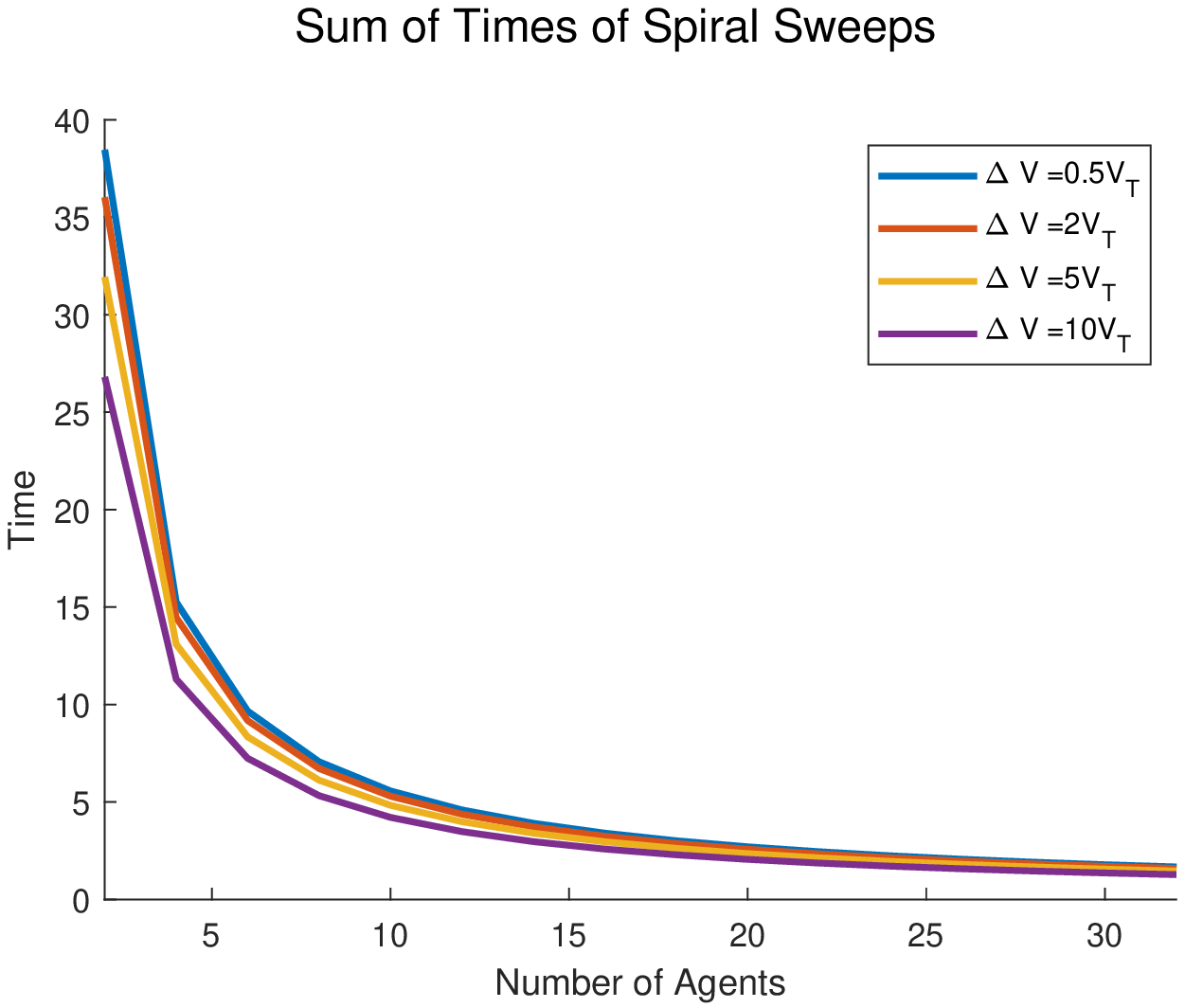} 
\end{minipage}
\begin{minipage}[b]{0.5\linewidth}
\centering
\includegraphics [width=3.4in,height=2.8in]{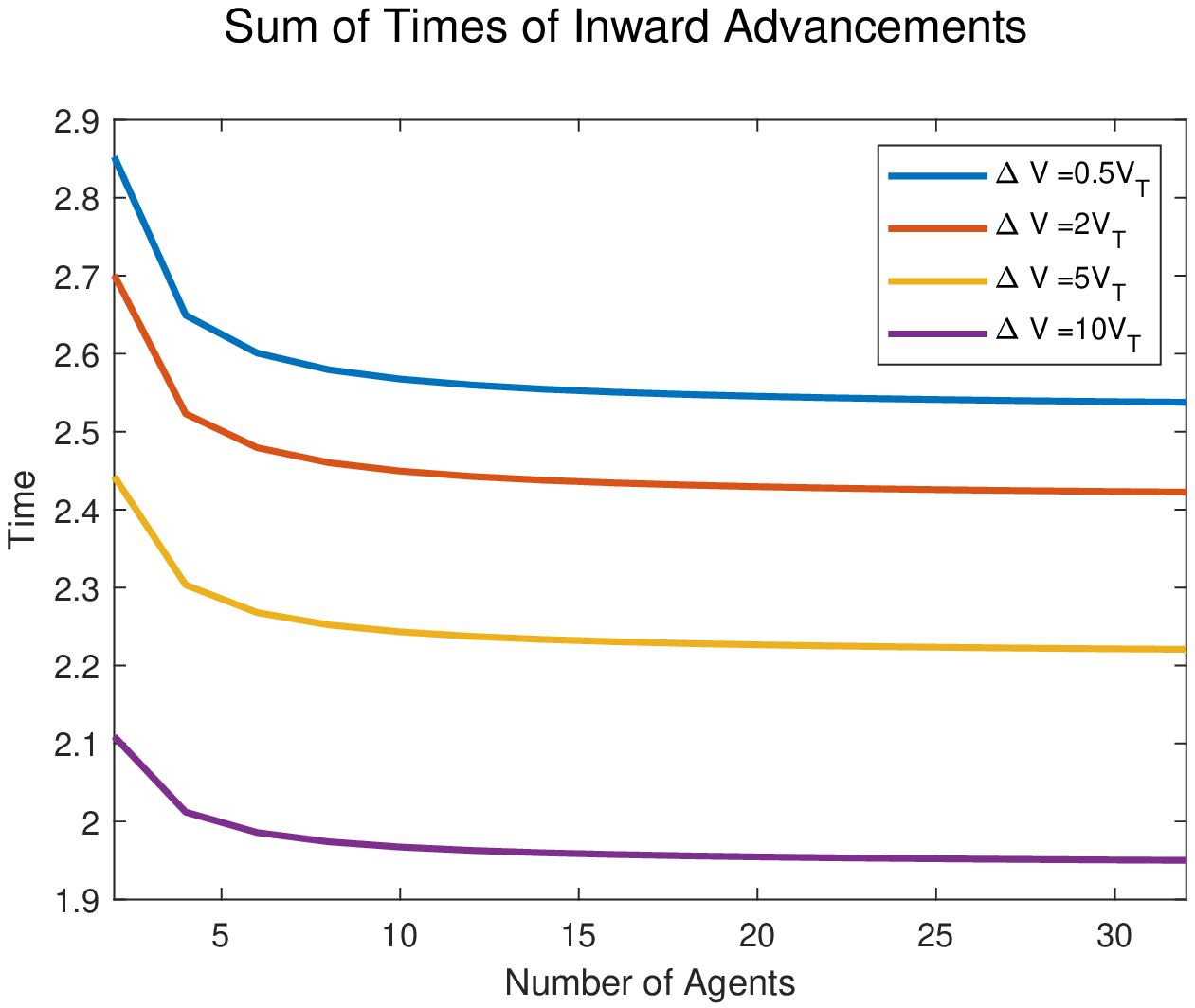}
\end{minipage}
\end{minipage}
\caption{The top left figure shows the time of complete cleaning of the evader region. The top right figure shows the ratio between the spiral sweep times of the search and the inward advancement times. The bottom left figure shows the total spiral sweep times of the search. The bottom right figure shows the total inward advancement times. In all figures we simulated the sweep processes for an even number of agents, ranging from $2$ to $32$ agents that employ the multi-agent spiral sweep process until complete cleaning of the evader region. The parameters values chosen for these plots are $r=10$, $V_T = 1$ and $R_0 = 100$. The values of $\Delta V$ are above the critical velocity of the circular sweep process. }
\end{figure*}

\begin{figure}[ht]
\noindent \centering{}\includegraphics[width=3.4in,height=2.8in]{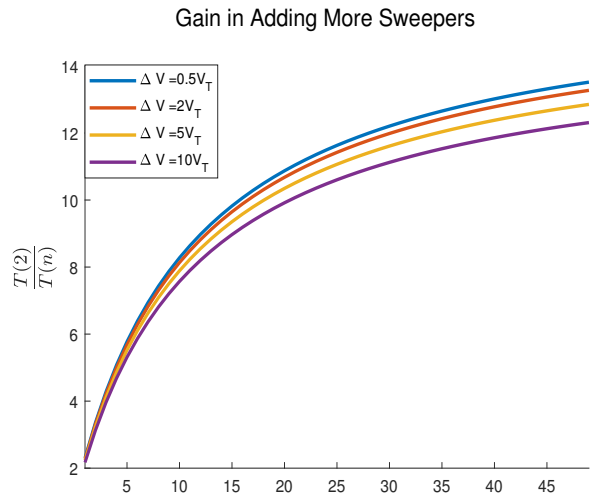} \caption{ Gain in cleaning time obtained by adding more sweepers. In this figure we simulated the sweep processes for an even number of agents, denoted by $n$, ranging from $2$ to $32$ agents that employ the multi-agent spiral sweep process where the inward advancements are continuous. In each of the curves, every point is obtained by the ratio between the sweep times of a $2$ agent swarm and an $n$ agent sweeper swarm. The values of $\Delta V$ are above the critical velocity of the circular sweep process.The parameters values chosen for this plot are $r=10$, $V_T = 1$ and $R_0 = 100$.}
\label{Fig18Label}
\end{figure}

\begin{figure}[ht]
\noindent \centering{}\includegraphics[width=3.4in,height=2.8in]{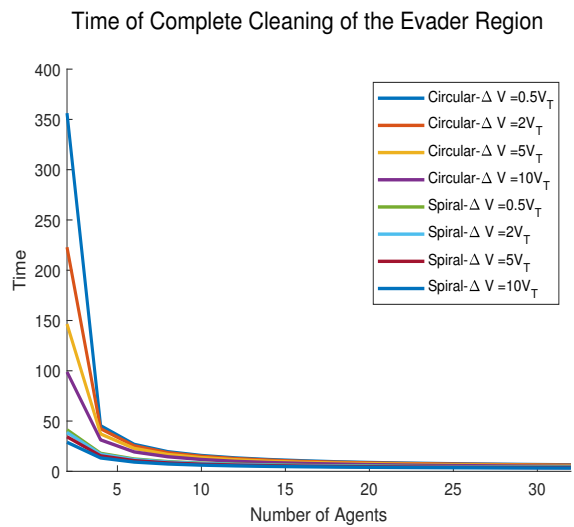} \caption{Total search times until complete cleaning of the evader region for the circular and spiral sweep processes in the case where the inward advancements do not occur in zero time. In this figure we simulated the sweep processes for an even number of agents, ranging from $2$ to $32$ agents, that employ the multi-agent circular and spiral sweep processes. We show the results obtained for different values of velocities above the circular critical velocity. The parameters values chosen for this plot are $r=10$, $V_T = 1$ and $R_0 = 100$.}
\label{Fig19Label}
\end{figure}

\section{Conclusions and Future Research Directions}
This research considers a scenario in which a multi-agent swarm of agents performs a search of an area containing smart mobile evaders that are to be detected. There can be many evaders that we wish do detect in this area, and we consider the domain to be continuous, meaning that an evader can be located at any point in an initial circular region of radius $R_0$. The sweepers are designed in a way that will require a minimal amount of memory in order to complete the required task since the sweeping protocol is predetermined and deterministic. The information the agents perceive only comes from their own sensors, and evaders that intersect a sweeper's linear field of view are detected. Every agent in the swarm has a line shaped sensor of length $2r$. All sweepers move with a speed of $V_s$ that is measured at the center of the linear sensor. The evaders move at a maximal speed of $V_T$. The sweepers objective is to ``clean" or to detect all evaders that can move freely in all directions from their initial locations in the circular region of radius $R_0$. The search time depends on the type of sweeping movement the sweeper swarm employs. The detection of evaders is done using a sweeping method around the region. The desired result is that after each sweep the radius of the circle that bounds the evader region will decrease by a value that is bounded away from zero. This will guarantee a complete cleaning of the evader region, by shrinking the possible area in which evaders can reside to zero, in finite time. We provide an analysis of the proposed sweep processes performance in terms of completion times of the search processes, that is the time at which all potential evaders that resided in the initial evader region were located. We also provide a global balance of covered areas argument to obtain a lower bound on a searcher velocity that is independent of the search process. Based on this point of view, a critical velocity that depends on a novel proposed circular search pattern for a multi agent swarm is derived. This search pattern ensures the satisfaction of the confinement task. This velocity is then compared to the lower bound on the critical velocity that is derived from a maximal swept area versus a minimal danger zone expansion area balance equation. Additionally, we show that the minimal agent velocity that ensures the satisfaction of the confinement task for the proposed circular search process is equal to twice the lower bound and hence is not optimal. An additional critical velocity that depends on a spiral search pattern for a multi agent swarm is derived ensuring the satisfaction of the confinement task as well. We show that the developed spiral critical velocity approaches the theoretical optimal critical velocity. Afterwards, we provide a comparison between the different search methods in terms of completion times of the sweep processes. When comparing the different search processes we compared both the total cleaning times as well as the minimal searcher velocity that are required for a successful search. A possible extension to this work is to replace the assumption made that whenever the evader is in the field of view of the searcher's sensor the evader is detected with probability $1$. A more realistic model would be a model which is based on the statistical properties of the actual detector so that the detection probability is maximized while minimizing the probability of false alarms. Another possible extension could be to introduce a tuneable robustness parameter that will compensate for an agent inability to move in a constant speed throughout the entire search. This robustness parameter will allow a margin between the evader region boundaries and the tip of the agent's sensor that is the farthest away from the center of the evader region. This margin will come at the expense of permitting the agents to advance a lesser distance into the evader region after each iteration. Such a method can compensate for the simplified model where the agents change their directions of travel in zero time for the planar sweep scenario and provide a more realistic model that relies on the conducted research. The assumption that the sweepers change their directions of travel in zero time can be also avoided if the sweepers sweep at different heights above the evader region as discussed.

\appendices
\section{}
The number of sweep iterations that are required to reduce the evader region to be bounded by a circle with a radius that is less or equal to $R_N$ is calculated in the following manner. We have that,
\begin{equation}
{R_{i + 1}} = {c_3}{R_i} + {c_1}
\label{e930}
\end{equation}
Therefore,
\begin{equation}
{R_N} = {c_3}^N{R_0} + {c_1}\sum\limits_{i = 0}^{N - 1} {{c_3}^i}  = {c_3}^N\left( {{R_0} - \frac{{{c_1}}}{{1 - {c_3}}}} \right) + \frac{{{c_1}}}{{1 - {c_3}}}
\label{e943}
\end{equation}
Rearranging terms results in,
\begin{equation}
\frac{{{R_N} - \frac{{{c_1}}}{{1 - {c_3}}}}}{{{R_0} - \frac{{{c_1}}}{{1 - {c_3}}}}} = {c_3}^N
\label{e944}
\end{equation}
Applying the natural logarithm function to both sides results leads to,
\begin{equation}
\ln \left( {\frac{{{R_N} - \frac{{{c_1}}}{{1 - {c_3}}}}}{{{R_0} - \frac{{{c_1}}}{{1 - {c_3}}}}}} \right) = N\ln {c_3}
\label{e945}
\end{equation}
And the general form for the number of iterations it takes the sweeper swarm to reduce the evader region to be bounded by a circle of a radius that corresponds to the last sweep before completely cleaning the evader region is given by,
\begin{equation}
N = \left\lceil {\frac{{\ln \left( {\frac{{{R_N} - \frac{{{c_1}}}{{1 - {c_3}}}}}{{{R_0} - \frac{{{c_1}}}{{1 - {c_3}}}}}} \right)}}{{\ln {c_3}}}} \right\rceil
\label{e946}
\end{equation}
\section{}
The number of sweep iterations that are required to reduce the evader region to be bounded by a circle with a radius that is less or equal to ${R_{N-2}}$ is calculated in the following manner. We have that,
\begin{equation}
{R_{i + 1}} = {c_3}{R_i} + {c_1}
\label{e915}
\end{equation}
Therefore,
\begin{equation}
{R_{N - 2}} = {c_3}^{N - 2}{R_0} + {c_1}\sum\limits_{i = 0}^{N - 3} {{c_3}^i}  = {c_3}^{N - 2}\left( {{R_0} - \frac{{{c_1}}}{{1 - {c_3}}}} \right) + \frac{{{c_1}}}{{1 - {c_3}}}
\label{e916}
\end{equation}
Rearranging terms results in,
\begin{equation}
\frac{{{R_{N - 2}} - \frac{{{c_1}}}{{1 - {c_3}}}}}{{{R_0} - \frac{{{c_1}}}{{1 - {c_3}}}}} = {c_3}^{N - 2}
\label{e917}
\end{equation}
Therefore the number of sweep iterations that are required to reduce the evader region to be bounded by a circle with a radius that is less or equal to ${R_{N-2}}$ is given by,
\begin{equation}
{R_{N - 2}} = \frac{{{c_1}}}{{1 - {c_3}}} + {c_3}^{N - 2}\left( {{R_0} - \frac{{{c_1}}}{{1 - {c_3}}}} \right)
\label{e918}
\end{equation}
\section{}
The time it takes a multi-agent swarm that performs either the circular sweep or the spiral sweep to completely clean the evader region is calculated as follows. The recursive relation between the next and current radius of the circle that bounds the evader region is given by,
\begin{equation}
{R_{i + 1}} = {c_3}{R_i} + {c_1}
\label{e931}
\end{equation}
Suppose that there exists a constant $\gamma$ such that
\begin{equation}
\gamma {R_i} = {T_i}
\label{e932}
\end{equation}
Therefore multiplying (\ref{e931}) by $\gamma$ on both sides of the equation yields,
\begin{equation}
{T_{i + 1}} = {c_3}{T_i} + {c_4}
\label{e933}
\end{equation}
Where $c_4$ is given by,
\begin{equation}
{{c_4} = \gamma {c_1}}
\label{e934}
\end{equation}
The time it takes to complete the first cycle around the evader region is ${T_0} = \gamma R_0$ and the time it takes to complete the last cycle before the evader region is bounded by a circle of radius $\frac{2r}{n}$ is, that is the time when the evader region is bounded by a circle of with a greater radius than ${R_{N - 1}}$ is given by ${T_{N - 1}} = \gamma {R_{N - 1}}$. Summing over the times of all cycles except the initial one is calculated by summing the cycle times given in (\ref{e933}). This results in,
\begin{equation}
\sum\limits_{i = 1}^{N - 1} {{T_i}}  = {c_3}\sum\limits_{i = 1}^{N - 1} {{T_i}}  + {c_3}({T_0} - {T_{N - 1}}) + \left( {N - 1} \right){c_4}
\label{e58}
\end{equation}
Rearranging terms results in,
\begin{equation}
\sum\limits_{i = 1}^{N - 1} {{T_i}}  = \frac{{{c_3}({T_0} - {T_{N - 1}}) + \left( {N - 1} \right){c_4}}}{{1 - {c_3}}}
\label{e59}
\end{equation}
Since the total time it takes the sweeper swarm to clean the evader region includes also the time of the first sweep we need to add $T_0$ to the summation as well. Thus the total time it takes it takes the sweeper swarm to reduce the evader region to be bounded by a circle of radius that less than or equal to $\frac{2r}{n}$ is given by,
\begin{equation}
T = \sum\limits_{i = 0}^{N - 1} {{T_i}} = \frac{{{T_0} - {c_3}{T_{N - 1}} + \left( {N - 1} \right){c_4}}}{{1 - {c_3}}}
\label{e60}
\end{equation}
\section{}
The time is takes to complete a sweep around the evader region that is bounded by a circle with a radius of ${R_{N-1}}$ is calculated as follows. From Appendix C we have that the recursive relation between the time it takes the sweeping agent to complete sweep number $i$ and the time it takes it to complete sweep number $i+1$ is given by,
\begin{equation}
{T_{i + 1}} = {c_3}{T_i} + {c_4}
\label{e919}
\end{equation}
Therefore,
\begin{equation}
{T_{N - 1}} = {c_3}^{N - 1}{T_0} + {c_4}\sum\limits_{i = 0}^{N - 2} {{c_3}^i}  = {c_3}^{N - 1}\left( {{T_0} - \frac{{{c_4}}}{{1 - {c_3}}}} \right) + \frac{{{c_4}}}{{1 - {c_3}}}
\label{e920}
\end{equation}
Rearranging terms results in,
\begin{equation}
\frac{{{T_{N - 1}} - \frac{{{c_4}}}{{1 - {c_3}}}}}{{{T_0} - \frac{{{c_4}}}{{1 - {c_3}}}}} = {c_3}^{N - 1}
\label{e921}
\end{equation}
Therefore the time is takes to complete a sweep around the evader region that is bounded by a circle with a radius of ${R_{N-1}}$ is given by,
\begin{equation}
{T_{N - 1}} = \frac{{{c_4}}}{{1 - {c_3}}} + {c_3}^{N - 1}\left( {{T_0} - \frac{{{c_4}}}{{1 - {c_3}}}} \right)
\label{e922}
\end{equation}
\section{}
The recursive relation between the next and current radius of the circle that bounds the evader region is given by,
\begin{equation}
{R_{i + 1}} = {c_3}{R_i} + {c_1}
\label{e923}
\end{equation}
Summing over the evader region radii up to the $N_n - 2$ cycle except the initial radius of the evader region is calculated by summing the radii given in (\ref{e923}). This results in,
\begin{equation}
\sum\limits_{i = 1}^{N_n - 2} {{R_i}}  = {c_3}\sum\limits_{i = 1}^{N_n - 2} {{R_i}}  + {c_3}({R_0} - {R_{N_n - 2}}) + \left( {N_n - 2} \right){c_1}
\label{e924}
\end{equation}
Rearranging terms results in,
\begin{equation}
\sum\limits_{i = 1}^{N_n - 2} {{R_i}}  = \frac{{{c_3}({R_0} - {R_{N_n - 2}}) + \left( {N_n - 2} \right){c_1}}}{{1 - {c_3}}}
\label{e925}
\end{equation}
Since the sum of radii in (\ref{e925}) does include the initial radius of the evader region we need to add $R_0$ to the summation as well. Thus the desired sum of radii is given by,
\begin{equation}
\sum\limits_{i = 0}^{{N_n} - 2} {{R_i}}  = \frac{{{R_0} - {c_3}{R_{{N_n} - 2}} + ({N_n} - 2){c_1}}}{{1 - {c_3}}}
\label{e926}
\end{equation}
\section{}
In this appendix the time of inward advancements until the evader region is bounded by a circle with a radius that is smaller or equal to $r$ is computed. This time is denoted by $\widetilde{{T_{in}}}(n) = \sum\limits_{i = 0}^{{N_n} - 2} {{T_{i{n_i}}}}$. This proof continues the derivation from section $\text{III}$. After rearranging terms (\ref{e1100}) resolves to,
\begin{equation}
\widetilde{{T_{in}}}= \sum\limits_{i = 0}^{{N_n} - 2} {{T_{i{n_i}}}}  = \frac{{\left( {{N_n} - 1} \right)r}}{{{V_s} + {V_T}}} - \frac{{2\pi {V_T}\sum\limits_{i = 0}^{{N_n} - 2} {{R_i}} }}{{n{V_s}\left( {{V_s} + {V_T}} \right)}}
\label{e101}
\end{equation}
The term $\sum\limits_{i = 0}^{{N_n} - 2} {\widetilde{{R_i}}}$ is calculated in Appendix $E$. It is given by,
\begin{equation}
\sum\limits_{i = 0}^{{N_n} - 2} {\widetilde{{R_i}}}  = \frac{{\widetilde{{R_0}} - {c_3}\widetilde{{R_{{N_n} - 2}}} + ({N_n} - 2){c_1}}}{{1 - {c_3}}}
\label{e102}
\end{equation}
Where the term $\widetilde{{R_{{N_n} - 2}}}$ is calculated in Appendix $B$. It is given by,
\begin{equation}
\widetilde{{R_{{N_n} - 2}}} = \frac{{{c_1}}}{{1 - {c_3}}} + {c_3}^{{N_n} - 2}\left( {\widetilde{{R_0}} - \frac{{{c_1}}}{{1 - {c_3}}}} \right)
\label{e103}
\end{equation}
Substituting the coefficients in (\ref{e103}) yields,
\begin{equation}
{R_{N - 2}} = \frac{{nr{V_s}}}{{2\pi {V_T}}} + {\left( {1 + \frac{{2\pi {V_T}}}{{n\left( {{V_s} + {V_T}} \right)}}} \right)^{{N_n} - 2}}\left( {\frac{{2\pi {R_0}{V_T} - nr{V_s}}}{{2\pi {V_T}}}} \right)
\label{e104}
\end{equation}
Substituting the coefficients in (\ref{e102}) yields,
\begin{equation}
\begin{array}{l}
\sum\limits_{i = 0}^{{N_n} - 2} {{R_i}}  =  - \frac{{{R_0}n\left( {{V_s} + {V_T}} \right)}}{{2\pi {V_T}}} + \frac{{{n^2}r{V_s}\left( {{V_s} + {V_T}} \right)}}{{{{\left( {2\pi {V_T}} \right)}^2}}}\left( {1 + \frac{{2\pi {V_T}}}{{n\left( {{V_s} + {V_T}} \right)}}} \right)
 + {\left( {1 + \frac{{2\pi {V_T}}}{{n\left( {{V_s} + {V_T}} \right)}}} \right)^{{N_n} - 1}}\left( {\frac{{n\left( {{V_s} + {V_T}} \right)\left( {2\pi {R_0}{V_T} - nr{V_s}} \right)}}{{{{\left( {2\pi {V_T}} \right)}^2}}}} \right)\\
 + \frac{{rn{V_s}({N_n} - 2)}}{{2\pi {V_T}}}
\end{array}
\label{e105}
\end{equation}
Plugging the expression for ${\sum\limits_{i = 0}^{{N_n} - 2} {{R_i}} }$ from (\ref{e105}) into (\ref{e101}) results in,
\begin{equation}
\widetilde{{T_{in}}} = \sum\limits_{i = 0}^{{N_n} - 2} {{T_{i{n_i}}}}  = \frac{{{R_0}}}{{{V_s}}} - \frac{{nr}}{{2\pi {V_T}}}
 - {\left( {1 + \frac{{2\pi {V_T}}}{{n\left( {{V_s} + {V_T}} \right)}}} \right)^{{N_n} - 1}}\left( {\frac{{2\pi {R_0}{V_T} - nr{V_s}}}{{2\pi {V_T}{V_s}}}} \right)
\label{e620}
\end{equation}
\section{}
In this appendix the time of inward advancements until the evader region is reduced to a circle with a radius that is smaller or equal to $2r$ is computed for a swarm that executes the spiral sweep process. This time is denoted by $\widetilde{{T_{in}}}(n) = \sum\limits_{i = 0}^{{N_n} - 2} {{T_{i{n_i}}}}$. This proof continues the derivation from section $\text{IV}$. After rearranging terms (\ref{e702}) resolves to,
\begin{equation}
\sum\limits_{i = 0}^{{N_n} - 2} {{T_{i{n_i}}}}  = \frac{{2r\left( {{N_n} - 1} \right) - \left( {{e^{\frac{{2\pi {V_T}}}{{n\sqrt {{V_s}^2 - {V_T}^2} }}}} - 1} \right)\sum\limits_{i = 0}^{{N_n} - 2} {\widetilde{{R_i}}} }}{{{V_s} + {V_T}}}
\label{e1087}
\end{equation}
The term $\sum\limits_{i = 0}^{{N_n} - 2} {\widetilde{{R_i}}}$ is calculated in Appendix $E$. It is given by,
\begin{equation}
\sum\limits_{i = 0}^{{N_n} - 2} {\widetilde{{R_i}}}  = \frac{{\widetilde{{R_0}} - {c_3}\widetilde{{R_{{N_n} - 2}}} + ({N_n} - 2){c_1}}}{{1 - {c_3}}}
\label{e726}
\end{equation}
Where the term $\widetilde{{R_{{N_n} - 2}}}$ is calculated in Appendix $B$. It is given by,
\begin{equation}
\widetilde{{R_{{N_n} - 2}}} = \frac{{{c_1}}}{{1 - {c_3}}} + {c_3}^{{N_n} - 2}\left( {\widetilde{{R_0}} - \frac{{{c_1}}}{{1 - {c_3}}}} \right)
\label{e725}
\end{equation}
Substituting the coefficients in (\ref{e725}) yields,
\begin{equation}
\widetilde{{R_{{N_n} - 2}}} = - \frac{{2r}}{{1 - {e^{\frac{{2\pi {V_T}}}{{n\sqrt {{V_s}^2 - {V_T}^2} }}}}}}
 + {c_3}^{{N_n} - 2}\left( {\frac{{{R_0}\left( {1 - {e^{\frac{{2\pi {V_T}}}{{n\sqrt {{V_s}^2 - {V_T}^2} }}}}} \right) + r\left( {1 + {e^{\frac{{2\pi {V_T}}}{{n\sqrt {{V_s}^2 - {V_T}^2} }}}}} \right)}}{{1 - {e^{\frac{{2\pi {V_T}}}{{n\sqrt {{V_s}^2 - {V_T}^2} }}}}}}} \right)
\label{e503}
\end{equation}
Denoting by $c_5$ the expression,
\begin{equation}
{c_5} = \frac{{{V_s} + {V_T}}}{{{V_s}\left( {1 - {e^{\frac{{2\pi {V_T}}}{{n\sqrt {{V_s}^2 - {V_T}^2} }}}}} \right)}}{\left( {\frac{{{V_T} + {V_s}{e^{\frac{{2\pi {V_T}}}{{n\sqrt {{V_s}^2 - {V_T}^2} }}}}}}{{{V_s} + {V_T}}}} \right)^{{N_n} - 1}}
\label{e7}
\end{equation}

Substituting the coefficients in (\ref{e726}) yields,
\begin{equation}
\begin{array}{l}
\sum\limits_{i = 0}^{{N_n} - 2} \widetilde{{{R_i}}}  = \frac{{\left( {{R_0} - r} \right)\left( {{V_s} + {V_T}} \right)}}{{{V_s}\left( {1 - {e^{\frac{{2\pi {V_T}}}{{n\sqrt {{V_s}^2 - {V_T}^2} }}}}} \right)}} + \frac{{2r\left( {{V_T} + {V_s}{e^{\frac{{2\pi {V_T}}}{{n\sqrt {{V_s}^2 - {V_T}^2} }}}}} \right)}}{{{V_s}{{\left( {1 - {e^{\frac{{2\pi {V_T}}}{{n\sqrt {{V_s}^2 - {V_T}^2} }}}}} \right)}^2}}}
 - {c_5}\left( {\frac{{{R_0}\left( {1 - {e^{\frac{{2\pi {V_T}}}{{n\sqrt {{V_s}^2 - {V_T}^2} }}}}} \right) + r\left( {1 + {e^{\frac{{2\pi {V_T}}}{{n\sqrt {{V_s}^2 - {V_T}^2} }}}}} \right)}}{{1 - {e^{\frac{{2\pi {V_T}}}{{n\sqrt {{V_s}^2 - {V_T}^2} }}}}}}} \right)\\
 - \frac{{2r({N_n} - 2)}}{{1 - {e^{\frac{{2\pi {V_T}}}{{n\sqrt {{V_s}^2 - {V_T}^2} }}}}}}
\end{array}
\label{e504}
\end{equation}
Plugging the expression for ${\sum\limits_{i = 0}^{{N_n} - 2} {\widetilde{{R_i}}} }$ from (\ref{e504}) into (\ref{e1087}) results in,
\begin{equation}
\begin{array}{l}
\widetilde{T_{in}}(n) = \frac{{2r}}{{{V_s} + {V_T}}} + \frac{{{R_0} - r}}{{{V_s}}} + \frac{{2r\left( {{V_T} + {V_s}{e^{\frac{{2\pi {V_T}}}{{n\sqrt {{V_s}^2 - {V_T}^2} }}}}} \right)}}{{{V_s}\left( {{V_s} + {V_T}} \right)\left( {1 - {e^{\frac{{2\pi {V_T}}}{{n\sqrt {{V_s}^2 - {V_T}^2} }}}}} \right)}}\\
 - \frac{{{{\left( {{V_T} + {V_s}{e^{\frac{{2\pi {V_T}}}{{n\sqrt {{V_s}^2 - {V_T}^2} }}}}} \right)}^{{N_n} - 1}}}}{{{V_s}\left( {{V_s} + {V_T}} \right)\left( {1 - {e^{\frac{{2\pi {V_T}}}{{n\sqrt {{V_s}^2 - {V_T}^2} }}}}} \right)}}\left( {{R_0}\left( {1 - {e^{\frac{{2\pi {V_T}}}{{n\sqrt {{V_s}^2 - {V_T}^2} }}}}} \right) + r\left( {1 + {e^{\frac{{2\pi {V_T}}}{{n\sqrt {{V_s}^2 - {V_T}^2} }}}}} \right)} \right)
\end{array}
\label{e617}
\end{equation}
\section{}
In this appendix we prove that after a spiral sweep the evader region is again circularly shaped. This results hold after every spiral sweep is completed. We will prove it for a two sweeper swarm. The extension to any number of even sweepers is straightforward.
We denote by ${t_\theta }$ the time it takes a sweeper to sweep an angle of $\theta({t_\theta })$ around the evader region. ${t_\theta }$ is given by,
\begin{equation}
{t_\theta } = \frac{{\left( {{R_i} - r} \right)\left( {{e^{\frac{{\theta {V_T}}}{{\sqrt {{V_s}^2 - {V_T}^2} }}}} - 1} \right)}}{{{V_T}}}
\label{e76}
\end{equation}
Similarly we denote by ${t_\pi }$ the time it takes a sweeper to sweep an angle of $\theta({t_\pi })$ around the evader region. ${t_\pi }$ is given by,
\begin{equation}
{t_\pi } = \frac{{\left( {{R_i} - r} \right)\left( {{e^{\frac{{\pi {V_T}}}{{\sqrt {{V_s}^2 - {V_T}^2} }}}} - 1} \right)}}{{{V_T}}}
\label{e77}
\end{equation}
The trajectory of the tip of the sensor that is closest to the center of the evader region of a sweeper that travels counter clockwise is given by,
\begin{equation}
{L_{CCW}}\left( {{t_\theta }} \right) = \left( {{R_i} - 2r + {V_T}{t_\theta }} \right)\left[ {\sin \left( {\theta \left( {{t_\theta }} \right)} \right),\cos \left( {\theta \left( {{t_\theta }} \right)} \right)} \right]
\label{e72}
\end{equation}
The trajectory of the tip of the sensor that is closest to the center of the evader region of a sweeper that travels clockwise is given by,
\begin{equation}
{L_{CW}}\left( {{t_\theta }} \right) = \left( {{R_i} - 2r + {V_T}{t_\theta }} \right)\left[ {\sin \left( {2\pi  - \theta \left( {{t_\theta }} \right)} \right),\cos \left( {2\pi  - \theta \left( {{t_\theta }} \right)} \right)} \right]
\label{e73}
\end{equation}
At time ${t_\theta }$ the counter clockwise sweeper detects the evaders up to the point ${L_{CCW}}\left( {{t_\theta }} \right)$ given in (\ref{e72}). From ${t_\theta }$ to ${t_{\pi }}$ the evaders expand from   ${L_{CCW}}\left( {{t_\theta }} \right)$ in all directions with a maximal speed of $V_T$ for ${t_{\pi }}-{t_\theta }$ time, resulting in a spread of radius ${V_T}\left( {{t_{\pi }} - {t_\theta }} \right)$ in all directions. Hence the wavefront from  ${L_{CCW}}\left( {{t_\theta }} \right)$ is defined by the curve,
\begin{equation}
E\left( {\theta ,\psi } \right) = {L_{CCW}}\left( {{t_\theta }} \right) + {V_T}\left( {{t_\pi } - {t_\theta }} \right)\left[ {\sin \left( \psi  \right),\cos \left( \psi  \right)} \right]
\label{e74}
\end{equation}
for all $\psi  \in \left[ {0,2\pi } \right]$. This shows the expansion from ${L_{CCW}}\left( {{t_\theta }} \right)$ at time ${t_{\pi }}$.
The evaders that spread to the furthest distance from the center of the evader region are the evaders that move along the ray between the center of the region and the position in which the lower tip of the sensor sweeps at time ${t_\theta }$. Combining this insight with the equation in (\ref{e74}) yields that at time ${t_\pi }$ these points satisfy that their distance from the center of the evader region is,
\begin{equation}
R\left( {{t_\pi }} \right) = {R_i} - 2r + {V_T}{t_\theta } + {V_T}\left( {{t_\pi } - {t_\theta }} \right) = {R_i} - 2r + {V_T}{t_\pi }
\label{e75}
\end{equation}
A calculation for the clockwise sweeping agent will result in the exact same distance of the furthest points from the center of the evader region that originated from the right half plane spread of evaders.
Therefore after the completion of the sweep the evader region will be a circle of radius ${R_i} - 2r + {V_T}{t_\pi }$.
\bibliographystyle{unsrt}
\bibliography{Search_for_Smart_Evaders_with_Swarms_of_Sweeping_Agents}
\end{document}